
\documentclass{birkjour}

\usepackage{amssymb}
\usepackage{amsmath}
\usepackage{hyperref}
%
%
%
\newcommand{\BR}{{\mathbb R}}

\newcommand{\BZ}{{\mathbb Z}}
\newcommand{\BC}{{\mathbb C}}
\newcommand{\e}{{\bf e}}
\renewcommand{\a}{{\bf a}}
\renewcommand{\b}{{\bf b}}
\newcommand{\f}{{\bf f}}
\newcommand{\g}{{\bf g}}
\renewcommand{\u}{{\bf u}}
\renewcommand{\v}{{\bf v}}
\newcommand{\Th}{{\bf T}_h} 


\newcommand{\cl}{C \kern -0.1em \ell}

\newtheorem{remark}{Remark}[section]
\newtheorem{lemma}{Lemma}[section]
\newtheorem{proposition}{Proposition}[section]
\newtheorem{corollary}{Corollary}[section]

\begin{document}

%
%
%
%
%
%
%
%
%

\title[Solutions for the Klein-Gordon and Dirac equations on the
 lattice]
 {Solutions for the Klein-Gordon and Dirac equations on the
 lattice based on Chebyshev polynomials}

\author[N.~Faustino]{Nelson Faustino}

\address{%
Departamento de Matem\'atica Aplicada \\ IMECC--Unicamp\\
CEP 13083--859, Campinas--SP, Brasil}

\email{\href{mailto:faustino@ime.unicamp.br}{faustino@ime.unicamp.br}}

\thanks{Research supported by the fellowship
\href{http://www.bv.fapesp.br/14523}{13/07590-8} of FAPESP (S.P., Brazil).}

\subjclass{Primary 30G35, 39A12; Secondary 33C05, 53Z05}

\keywords{Chebyshev polynomials; discrete Dirac operators; lattice
fermion doubling; spinor fields.}

\date{\today}

\begin{abstract}
The main goal of this paper is to adopt a multivector calculus
scheme to study finite difference discretizations of Klein-Gordon
and Dirac equations for which Chebyshev polynomials of the first
kind may be used to represent a set of solutions. The development of
a well-adapted discrete Clifford calculus framework based on spinor
fields allows us to represent, using solely projection based
arguments, the solutions for the discretized Dirac equations from
the knowledge of the solutions of the discretized Klein-Gordon
equation. Implications of those findings on the interpretation of
the lattice fermion doubling problem is briefly discussed.
\end{abstract}

\maketitle

\section{Introduction}

\subsection{State of art}\label{StateArtSubsection}

For a variety of reasons, the study of equations from relativistic
wave mechanics through the incorporation of a fermionic lattice
structure in the discrete space-time, plays an important role far
beyond the design of non-perturbative methods in Quantum
Electrodynamics (QED) and Quantum Chromodynamics (QCD)
(cf.~\cite[Chapters 4 \& 5 ]{MontvayMunster94}). Such kind of
lattice structure, also used in the study Higgs and Yukawa models
(cf. \cite[Chapter 6]{MontvayMunster94}), was widely popularized
during the last decade from its crucial role on the representation
of Ising models as {\it local} conformal structures of spin-type
(cf.~\cite{Mercat01,ChelkakSmirnov12}).

Historically it was D.~Bohm (cf. \cite{Bohm65}) one of the firsts
that recognizes such need and E.A.B.~Cole \cite{Cole70} one amongst
many that present the former routes addressing to this topic. With
the seminal works of K.~Wilson \cite{Wilson74} and Kogut-Susskind
\cite{KS75}, involving lattice regularizations of Dirac equations,
it was realized that there is a spectrum degeneracy phenomenon
provided by the replication of fermionic states in the massless
limit $m \rightarrow 0$ for the resulting discretized equations, the
so well-know {\it lattice fermion doubling} problem.

Years later, J.~M.~Rabin (cf.~\cite{Rabin82}) explained with some
detail that such gap is indeed a direct consequence of the
non-trivial topology of the lattice momentum space supplied by a
cut-off. Such kind of topology, isomorphic to a $n-$dimensional
torus $(\BR/h\BZ)^n \cong \BR^n/h\BZ^n$, corresponds in the momentum
space to the restriction of all momenta to the cube
$Q_h=\left[-\frac{\pi}{h},\frac{\pi}{h}\right]^n$, the so-called
{\it Brillouin zone} (cf. \cite{Froyen89}). Moreover, it was
explained in detail that this gap is only filled by the chiral
breaking of the symmetries provided by the {\it continuum} model.
Further details arising this kind of construction may also be found
on the papers \cite{Borici08,Creutz08}.

The proper mathematical foundation of {\it lattice fermion
doubling}, traced in depth by Nielsen--Ninomiya in \cite{NN81} and
proved afterwards by D.~Friedan in \cite{Friedan82}, shows that
through lattice regularizations of fermionic fields the existence of
doublers always hold when one impose translational invariance,
hermiticity and locality constraints. That is, the doubling of
solutions underlying discretized models always arise under this set
of constraints (cf.~\cite[Sections 2 \& 3]{Friedan82}).

Driven by the combination of these ideas with Becher-Joos's insights
\cite{BecherJoos82} on Dirac-K\"ahler fermions, other kinds of
lattice formulations such as the approaches obtained by J.~Vaz
\cite{Vaz97} and Kanamori--Kawamoto \cite{KK04} were obtained. The
methods employed on both formulations were essentially build up from
Dimakis--M\"uller-Hoissen approach on
  noncommutative differential calculus over discrete sets (cf. \cite{DimakisHoissen94}).

In a different context, mainly driven by the need of obtaining
factorizations for discrete Laplacians on combinatorial surfaces,
I've introduced in collaboration with U.~K\"ahler and F.~Sommen
(cf.~\cite{FKS07}) finite difference approximations for the Dirac
operator in {\it continuum}. In the spirit of multivector calculus,
it was explained on my PhD~dissertation~\cite{Faustino09} that such
kind of discretization is interrelated with the Dirac-K\"ahler
formalism considered in \cite{BecherJoos82,Vaz97,KK04}.

Interestingly enough (but not yet fully adopted or known in the
community) combination of tools from finite difference potentials
(cf.~\cite{GH01,CKKS14}) and interpolation theory
(cf.~\cite{Froyen89,ShaFannPaulino03}) arising in this context may
also be useful in the modelling of problems of quantum field theory
over the phase space $h\BZ^n \times
\left[-\frac{\pi}{h},\frac{\pi}{h}\right]^n$ (see, for instance,
\cite{MonacoCapelas94,daVeiga02}).

\subsection{Outline of the paper}

In this paper we will show the feasibility of discrete Clifford
calculus in the exact representation of the solutions of some
discretized equations from wave mechanics. To this end, a consistent
multivector calculus scheme through the lattice $h\BZ^n$ will be
introduced in Section \ref{MultivectorCalculusLattice} with the aim
of investigate in Section \ref{KleinGordonDiracSection} the
solutions of the discretized time-harmonic Klein-Gordon equation
\begin{eqnarray}
\label{KleinGordonLattice}\Delta_h \f(x)=m^2\f(x)
\end{eqnarray}
for a given finite difference approximation $\Delta_h$ of the
Laplace operator $\Delta=\sum_{j=1}^n\partial_{x_j}^2$, and
moreover, the solutions of a discretized Dirac equation from the
knowledge of the solutions of (\ref{KleinGordonLattice}). Such
characterization corresponds in the paper to Proposition
\ref{DiracLatticeProposition} and Corollary
\ref{DiracLatticeCorollary}.

The problems of foremost interest treated in Section
\ref{KleinGordonDiracSection} also involve the hypercomplex
extension of the Chebyshev polynomials of the first kind
\begin{eqnarray}
\label{ChebyshevPoly} T_{k}(\lambda)=\frac{1}{2}\left(
\lambda+\sqrt{\lambda^2-1}\right)^k+\frac{1}{2}\left(\lambda-\sqrt{\lambda^2-1}\right)^k
\end{eqnarray}
underlying to the hypergeometric series representation
${~}_2F_1\left( -k,k;\frac{1}{2};\frac{1-\lambda}{2} \right)$.

The choice of this kind of polynomials, also ubiquitous in E.A.B.
Cole's former approach (cf. \cite[p.~650]{Cole70}), was motivated
from its wide range of applications far beyond the computation of
  discrete cosine transforms (cf. \cite{ShaFannPaulino03}).

Following the same train of thought of Bor$\check{s}$tnik-Nielsen's
seminal paper on multivector calculus (cf.~\cite{BorsNielsen00}), we
will look further in Section \ref{MultivectorCalculusLattice} to the
emboid of a Clifford algebra with signature $(0,n)$ onto the algebra
$\mathcal{A}_h$ of all real-valued lattice functions in $h\BZ^n$.

To this end, one will formulate the wedge ($\wedge$) and the dot
($\bullet$) product between a Clifford generator and a multivector
function on the lattice by taking into account two kinds of
displacement actions on the axis of $\BR^n$ in the same flavor of
noncommutative differential calculus carrying a discrete set
(cf.~\cite{DimakisHoissen94}). In brief,~an {\it involution} action
combined with forward and/or backward shifts on $h\BZ^n$ will be
adopted with the aim of filling the noncommutative gap provided by
the lack of a {\it true} Leibniz rule for the standard finite
difference operators. As it will be shown in Lemma
\ref{nilpotencyWittDh}, such kind of constraints assures a
generalized Leibniz rule for the corresponding finite difference
operators of multivector type.

Based on the aforementioned facts, the resulting discrete Clifford
calculus -- i.e. a geometric based extension of discrete vector
calculus-- obtained through the Fock space representation of
$\mbox{End}(\Lambda^*\mathcal{A}_h)$ (cf.~\cite[Section
3.2]{Faustino09}) will allows us to describe the basic left/right
endomorphisms acting on $\Lambda^*\mathcal{A}_h$ as the canonical
equivalents of the generators of a Clifford algebra with signature
$(n,n)$. In opposition to \cite{Vaz97,KK04}, on which the resulting
Clifford geometric products are distributive but non-associative,
the Fock space representation encoded by the wedge and dot products
avoid {\it a priori} problems related with associativity and
distributivity on the product between a Clifford basis and a
multivector function with membership in $\Lambda^*\mathcal{A}_h$.

The major challenging here against \cite{BorsNielsen00} will be the
introduction of a unitary action over the resulting multivector
space through a {\it local} action $\chi_h(x)$ on $h\BZ^n$. Such
action is closely related with a pseudoscalar representation within
the Clifford algebra of signature $(n,n)$. As one will see in
Section \ref{FermionDoublingSection}, such {\it local} action of
unitary type incorporates the Kogut-Susskind fermion regularization
(see also \cite{Borici08,Creutz08} for further comparisons).

The intriguing aspect besides this staggered based formulation is
that such lattice actions will also endow projection operators that
provide, as in \cite{RochaVaz06}, a direct sum decomposition
involving exterior algebras of chiral and achiral type, similar to
spinor spaces $\frac{1}{2}\left(1\pm\gamma\right)\cl_{0,n}$ that
appear on the direct sum decomposition
$$
\cl_{n,n}=\frac{1}{2}\left(1+\gamma\right)\cl_{0,n}
\oplus\frac{1}{2}\left(1-\gamma\right)\cl_{0,n}
$$
so that $\gamma$ is a pseudoscalar of $\cl_{n,n}$ satisfying
$\gamma^2=1$ (cf.~\cite[Sections IV. \& V.]{BorsNielsen00}).

\section{Multivector calculus on the lattice}\label{MultivectorCalculusLattice}

\subsection{Real Clifford algebras}

Let $\BR^{p+q}$ be the standard $(p+q)-$dimensional Euclidean space.
The Clifford algebra over $\BR^{p+q}$ with signature $(q,p)$
corresponds to a real associative algebra with identity $1$,
containing $\BR$ and $\BR^{p+q}$ as subspaces, and in which the
basis elements $\e_1,\e_2,\ldots,\e_{p},\e_{p+1},\ldots,\e_{p+q}$
satisfy the following graded anti-commuting relations
\begin{eqnarray}
\label{CliffordBasis}
\begin{array}{lll}
\e_j \e_k+ \e_k \e_j=-2\delta_{jk}, & j,k=1,2,\ldots,p \\
\e_{j} \e_{k+q}+ \e_{k+q} \e_{j}=0, & j,k=1,2,\ldots,p \\
\e_{j+q} \e_{k+q}+ \e_{k+q} \e_{j+q}=2\delta_{jk}, &
j,k=1,2,\ldots,p.
\end{array}
\end{eqnarray}

Based on the above set of constraints, a basis $\e_J$ for
$\cl_{q,p}$ then consists of $\e_{J}=\e_{j_1}\e_{j_2}\ldots
\e_{j_r}$, where $J=\{j_1,j_2,\ldots,j_r\}$ is a partially ordered
subset of $\{ 1,2,\ldots,p,p+1,\ldots,p+q\}$ with cardinality
$|J|=r$ so that $0 \leq r\leq p+q$. For $J=\varnothing$ (empty set)
we will use the convention $\e_\varnothing=1$.

Then, any Clifford number $\a\in \cl_{q,p}$ may thus be expressed as
 \begin{eqnarray*}
\a=\sum_{r=0}^{p+q}\sum_{|J|=r} a_J ~\e_J.
\end{eqnarray*}

 Herewith, it is important to notice that $\cl_{q,p}$ is a
universal algebra of dimension $2^{p+q}$, linear isomorphic to the
exterior algebra $\Lambda^*\left(\BR^{p+q}\right)$ (cf.
\cite[Chapter 2]{RodriguesOliveira07}). In such way, the elements of
$\BR$ in $\cl_{q,p}$ are represented as $\a=a~\e_\varnothing$
whereas the vectors of $\BR^{p+q}$ correspond in $\cl_{q,p}$ to the
ansatz $\displaystyle z=\sum_{j=1}^p x_j\e_{j}+y_{j}\e_{j+q}$.
Notice also that $x=\sum_{j=1}^p x_j\e_j$ corresponds to the
$\cl_{0,p}-$valued representation of the $p-$ tuple of
$(x_1,x_2,\ldots, x_p)$ of $\BR^p$ whereas $\displaystyle
y=\sum_{j=1}^p y_{j} \e_{j+q}$ corresponds to $\cl_{q,0}-$valued
representation of the $q-$tuple $(y_{1},y_{2},\ldots, y_{q})$ of
$\BR^q$.

There are essentially three automorphisms that leave the multivector
structure of $\cl_{q,p}$ invariant. They are defined as follows:
\begin{itemize}
\item The {\it main involution} $\a \mapsto \a'$ is defined
recursively via
\begin{eqnarray}
\label{involution}
\begin{array}{lll}
(\a \b)'=\a'\b' \\ (a_J \e_J)'
=a_J~\e_{j_1}'\e_{j_2}'\ldots \e_{j_r}' \\
\e_j'=-\e_j ~~~\mbox{and}~~~ \e_{j+q}'=\e_{j+q}.
\end{array}
\end{eqnarray}
\item The {\it reversion} $\a \mapsto \a^*$ is defined recursively
via
\begin{eqnarray}
\label{reversion}
\begin{array}{lll}
(\a \b)^*=\b^*\a^* \\ (a_J \e_J)^*
=a_J~\e_{j_r}^*\ldots \e_{j_2}^*\e_{j_1}^*~~~ \\
\e_j^*=\e_j~~~\mbox{and}~~~\e_{j+q}^*=\e_{j+q}.
\end{array}
\end{eqnarray}
\item The $\dag-${\it conjugation} $\a \mapsto \a^\dag$ is defined
recursively via
\begin{eqnarray}
\label{conjugation}
\begin{array}{lll}
(\a \b)^\dag=\b^\dag\a^\dag \\ (a_J \e_J)^\dag =a_J~\e_{j_r}^\dag
\ldots \e_{j_2}^\dag\e_{j_1}^\dag \\
\e_j^\dag=-\e_j~~~\mbox{and}~~~\e_{j+q}^\dag=\e_{j+q}.
\end{array}
\end{eqnarray}
\end{itemize}

Notice that the $\dag-${\it conjugation} may be rewritten as a
composition between {\it main involution} and {\it reversion}, that
is $(\a')^*=(\a^*)'=\a^\dag$ holds for every $\a\in \cl_{q,p}$
whereas in case when $\a$ belongs to $\cl_{0,p}$ the Clifford number
$\a^\dag$ coincides with the standard conjugation $\overline{\a}$
over $\cl_{0,p}$ i.e. $\a^\dag=\overline{\a}$. So, any Clifford
vector $x=\sum_{j=1}^p x_j \e_j$ of $\cl_{0,p}$ satisfies $x^*=x$
and $x^\dag=x'=-x$.

Let us now turn our attention to the description of the real
Clifford algebra $\cl_{n,n}$ ($q=p=n$) as a canonical realization of
the algebra of endomorphisms $\mbox{End}(\cl_{0,n})$.

 First,
recall that in terms of the {\it main involution} automorphism, the
Clifford product $\e_j\a$ may be decomposed as
$$
\e_j \a=\frac{1}{2}\left( \e_j \a+(\a\e_j)'\right)+\frac{1}{2}\left(
\e_j \a-(\a\e_j)'\right).
$$

The symmetric and skew-symmetric parts of the above summand thus
give rise to the multivector counterparts for the dot ($\bullet$)
and wedge ($\wedge$) product. Indeed, introducing $\e_j \bullet \a$
and $\e_j \wedge \a$ as
\begin{center}
$\e_j \bullet \a=-\frac{1}{2}\left( \e_j \a-\a'\e_j\right)$ and
$\e_j \wedge \a=\frac{1}{2}\left( \e_j \a+\a'\e_j\right)$
\end{center}
it follows then $\e_j\a= -\e_j \bullet \a+ \e_j \wedge \a$.

Next, let us identify the generators of $\cl_{n,n}$ as left and
right endomorphisms via the set of canonical correspondences
\begin{eqnarray}
\label{EndCl0n}\e_{j}: \a \mapsto \e_j\a&~~\mbox{and}~~ & \e_{j+n}:
\a \mapsto \a'\e_j.
\end{eqnarray}

From the combination of (\ref{EndCl0n}) with (\ref{CliffordBasis})
it is clear that the set of relations $\e_j(\e_j\a)=\e_j^2\a=-\a$
and $(\a'\e_j)'\e_j=\e_{j+n}^2\a=\a$ reveal the canonical
isomorphism between $\mbox{End}(\cl_{0,n})$ and $\cl_{n,n}$ provided
by (\ref{EndCl0n}).

  The dot and wedge products defined above suggest the introduction
of the set of operators $\e_j^+:\a \mapsto \e_j \bullet \a$ and
$\e_j^-:\a\mapsto \e_j \wedge \a$, the so-called Witt basis. From
(\ref{EndCl0n}) $\e_j^+=\e_j \bullet (\cdot)$ and $\e_j^-=\e_j
\wedge ( \cdot)$ may be rewritten as
\begin{eqnarray}
\label{WittBasis} \e_j^+=\frac{1}{2}\left( \e_{j+n}-\e_j\right)
&~~\mbox{and}~~& \e_j^-=\frac{1}{2}\left( \e_{j+n}+\e_j\right).
\end{eqnarray}

Therefore, the set $\{\e_j^+,\e_j^-~:~j=1,2,\ldots,n\}$ also forms a
basis for $\mbox{End}(\cl_{0,n})$. The remaining set of graded
anti-commuting relations are given by
\begin{eqnarray}
\label{WittBasisjk}
\begin{array}{lll}
\e_j^-\e_k^- + \e_k^-\e_j^-&=&0 \\
\e_j^+\e_k^+ + \e_k^+\e_j^+&=&0 \\
 \e_j^-\e_k^+ +\e_k^+\e_j^-&=&\delta_{jk}.
\end{array}
\end{eqnarray}

Conversely, any generator of $\cl_{n,n}$ may be rewritten as a
linear combination involving the basis elements of
$\mbox{End}(\cl_{0,n})$. The remaining linear combinations thus
correspond to
\begin{center}
$\e_j=\e_j^--\e_j^+$ and $\e_{j+n}=\e_j^-+\e_j^+$.
\end{center}

\subsection{Discrete multivector functions}\label{discreteMultivectorSubs}

From now on we will adopt the multivector representation
$x=\sum_{j=1}^n x_j\e_j$ when one refers to the $n-$ tuple
$(x_1,x_2,\ldots, x_n)$ of $\BR^n$ and the displacements $x\pm h
\e_j$ along the $x_j-$axis when one refer to forward/backward shifts
$(x_1,x_2,\ldots,x_j\pm h,\ldots, x_n)$ over the lattice $h\BZ^n$
with mesh width $h>0$. Any $\cl_{0,n}-$valued function $\f(x)$ may
thus be represented as
\begin{eqnarray*}
\f(x)=\sum_{r=0}^n \sum_{|J|=r} f_J(x)\e_J, &~\mbox{with}~&~x \in
h\BZ^n.
\end{eqnarray*}

The algebra containing all the lattice functions $f_J(x)$ will be
denoted by $\mathcal{A}_h$ whereas the linear space containing the
summands of the form $\sum_{|J|=r} f_J(x)\e_J$ will be denoted by
$\Lambda^r \mathcal{A}_h$. Clearly, one has $\Lambda^0
\mathcal{A}_h=\mathcal{A}_h$ and graded direct sum decomposition
 $$
\Lambda^*\mathcal{A}_h=\bigoplus_{r=0}^n \Lambda^r \mathcal{A}_h.
$$

With the aim of filling the lack of commutativity over the lattice
$h\BZ^n$ one will associate to the basis elements $\e_j^-=\e_j
\wedge (\cdot)$ and
 $\e_j^+=\e_j \bullet (\cdot)$ provided
by (\ref{WittBasis}) the following noncommutative actions over
$\Lambda^*\mathcal{A}_h$:
\begin{eqnarray}
\label{TranlationsEpmj}
\begin{array}{ccc}
\e_j^+ \f(x)&=&\f(x-h\e_j)'~\e_j^+ \\ \e_j^-
\f(x)&=&\f(x+h\e_j)'~\e_j^-.
\end{array}
\end{eqnarray}

It is clear from (\ref{TranlationsEpmj}) that the action of each
$\e_j^+\e_j^-$ resp. $\e_j^-\e_j^+$ on $\f(x)$ are commutative.
Also, from (\ref{WittBasisjk}) one can see that the set of operators
$\e_j^+\e_j^-$ resp. $\e_j^-\e_j^+$ mutually commute and satisfy,
for each $j=1,2,\ldots,n$, the set of idempotent relations
$\e_j^+\e_j^-\left(\e_j^+\e_j^-\f(x)\right)=\e_j^+\e_j^-\f(x)$ resp.
$\e_j^-\e_j^+\left(\e_j^-\e_j^+\f(x)\right)=\e_j^-\e_j^+\f(x)$.
 Moreover,
from (\ref{WittBasis}) and (\ref{CliffordBasis}) the mappings
\begin{eqnarray*}
\e_j^+\e_j^-&:&\f(x)\mapsto \e_j \bullet (\e_j \wedge \f(x)) \\
\e_j^-\e_j^+&:&\f(x)\mapsto \e_j \wedge (\e_j \bullet \f(x))
\end{eqnarray*} have the following $\cl_{n,n}$ representation:
\begin{eqnarray*}
\e_j^+\e_j^-=\frac{1}{2}\left(1+\e_{j+n}\e_j\right) & ~~\mbox{and}~~
& \e_j^-\e_j^+=\frac{1}{2}\left(1-\e_{j+n}\e_j\right).
\end{eqnarray*}

Here we would like to observe that from (\ref{CliffordBasis}) the
set of graded commutator mappings on
$\mbox{End}(\Lambda^*\mathcal{A}_h)$, defined as
$$[\e_j^+,\e_j^-]:\f(x) \mapsto
\e_j \bullet (\e_j \wedge \f(x))-\e_j \wedge (\e_j \bullet \f(x)),$$
correspond in $\cl_{n,n}$ to the bivectors $\e_{j+n}\e_j$. Moreover,
they mutually commute and satisfy, for each $j=1,2\ldots,n$, the set
of unitary relations
$$[\e_j^+,\e_j^-]\left([\e_j^+,\e_j^-]\f(x)\right)=\f(x).$$ That
allows us to obtain two projection operators as idempotents of
$\mbox{End}(\Lambda^*\mathcal{A}_h)$.
  To this end, let us introduce the following {\it local} action on
  $h\BZ^n$:
\begin{eqnarray}
\label{staggeredFermion}
\chi_h(x)=\prod_{j=1}^n(-1)^{\frac{x_j}{h}}[\e_j^+,\e_j^-].
\end{eqnarray}

A short computation shows that
$\chi_h(x)^2=\prod_{j=1}^n(-1)^{\frac{2x_j}{h}}[\e_j^+,\e_j^-]^2=1$,
and thus, the unitary relation
$\chi_h(x)\left(\chi_h(x)\f(x)\right)=\f(x)$. Therefore, elements of
the form $\frac{1}{2}\left(1\pm \chi_h(x)\right)\in
\mbox{End}(\Lambda^*\mathcal{A}_h)$ are also idempotent, that is
\begin{eqnarray*}\frac{1}{2}\left(1\pm
\chi_h(x)\right)\left[\frac{1}{2}\left(\f(x)\pm
\chi_h(x)\f(x)\right)\right]
 =\frac{1}{2}\left(\f(x)\pm
\chi_h(x)\f(x)\right).
\end{eqnarray*}

This allows us to introduce the multivector spaces $\Lambda^*_\pm
\mathcal{A}_h$, defined as follows:
\begin{eqnarray*}
\Lambda^*_+ \mathcal{A}_h&=&\left\{ \frac{1}{2}\left(\f(x)+
\chi_h(x)\f(x)\right)~:~\f(x) \in \Lambda^* \mathcal{A}_h \right\}
\\ \Lambda^*_- \mathcal{A}_h&=&
\left\{ \frac{1}{2}\left(\f(x)- \chi_h(x)\f(x)\right)~:~\f(x) \in
\Lambda^* \mathcal{A}_h   \right\}.
\end{eqnarray*}

In concrete, for any lattice function $\f_{+-}(x)$ with membership
in $\Lambda^*_+\mathcal{A}_h \oplus \Lambda^*_-\mathcal{A}_h$ there
exist two multivector functions $\u(x),\v(x) \in
\Lambda^*\mathcal{A}_h$ such that
\begin{eqnarray}
\label{ClnnDirectSum} \f_{+-}(x)&=&\frac{1}{2}\left(\u(x)+
\chi_h(x)\u(x)\right)+\frac{1}{2}\left(\v(x)- \chi_h(x)\v(x)\right).
\end{eqnarray}

The uniqueness of $\u(x)$ and $\v(x)$ is thus assured by the null
relations $\left(1-\chi_h(x)\right)\left(\u(x)+\chi_h(x)\u(x)\right)
=\left(1+\chi_h(x)\right)\left(\v(x)-\chi_h(x)\v(x)\right)=0$.
 Indeed, by letting
act $\frac{1}{2}\left(1+ \chi_h(x)\right)$ and
$\frac{1}{2}\left(1-\chi_h(x)\right)$
 on both sides of (\ref{ClnnDirectSum})
it follows
\begin{eqnarray}
\label{ClnnProjection}
\begin{array}{lll}
\frac{1}{2}\left(\f_{+-}(x)+ \chi_h(x)\f_{+-}(x)\right)=
\frac{1}{2}\left(\u(x)+ \chi_h(x)\u(x)\right) \\ \ \\
\frac{1}{2}\left(\f_{+-}(x)- \chi_h(x)\f_{+-}(x)\right)
=\frac{1}{2}\left(\v(x)- \chi_h(x)\v(x)\right).
\end{array}
\end{eqnarray}

We will use the subscript notations $\f_+(x)$ and $\f_-(x)$ to
denote the multivector functions of the form
$\f_\pm(x)=\frac{1}{2}\left(\f(x) \pm \chi_h(x)\f(x)\right)$. The
bold notations $\f(x),\g(x),\ldots,\u(x)$ and so on will be only
used when we refer to a multivector function belonging to
$\Lambda^*\mathcal{A}_h$.

\begin{remark}
In dimension $n=4$, the direct sum $\Lambda^*_+\mathcal{A}_h \oplus
\Lambda^*_-\mathcal{A}_h$ looks like a {\it Dirac-like spinor}
structure based on the homogeneous representation of the {\it
special unitary group} $SU(2)$ modulo the idempotents
$\frac{1}{2}(1+\chi_h(x))$ and $\frac{1}{2}(1-\chi_h(x))$. Such
ladder structure gives rise to two independent irreducible
representations of the {\it special orthogonal group} $SO(3)$.

When one takes the multivector extension of $\BC^4$ as the
noncommutative ring of quaternions $\mathbb{H}$, from the
isomorphism $\cl_{0,3}\cong \mathbb{H} \oplus \mathbb{H}$ it follows
therefore that $\Lambda^*_+\mathcal{A}_h \oplus
\Lambda^*_-\mathcal{A}_h$ may also be represented, up to the
permutation sign $\displaystyle (-1)^{\sum_{j=1}^n\frac{x_j}{h}}$,
in terms of Dirac matrices $\gamma_j$, with $j=0,1,2,3$
(cf.~\cite[Subsections 2.2 \& 2.3]{Vaz97WittenEq}).

We refer the reader to \cite[Subsection IV.H.]{BorsNielsen00} on
which such discussion involving a $SU(2)\times SU(2)$ representation
for the Lorentz group was taken in the context of Weyl bispinors.
\end{remark}

\subsection{Discrete Dirac operators}

After defining the spaces of multivector functions through the last
subsection, one move now to the construction of finite difference
discretizations for the Dirac operator in intertwining with the
formulations \cite{KK04} and \cite{FKS07}. In order to proceed we
define, for each $j=1,2,\ldots,n$, the forward/backward finite
difference operators $\partial_h^{+j}/\partial_h^{-j}$ by the
coordinate formulae
\begin{eqnarray}
\label{DiffPmj}
\partial_{h}^{+j}\f(x)=\dfrac{\f(x+h\e_j)-\f(x)}{h} &\mbox{and} &
\partial_{h}^{-j}\f(x)=\dfrac{\f(x)-\f(x-h\e_j)}{h}.
\end{eqnarray}

 Take into account the forward/backward finite difference Dirac operators
$D^{\pm}_h=\sum_{j=1}^n\e_j
\partial_{h}^{\pm j}$ already considered in
\cite{FaustinoMonomiality14}, we associate to each lattice function
$\f(x)$ the multivector actions
\begin{center}
$\partial_h^+ \f(x)=D_h^+ \bullet \f(x)$ and $\partial_h^-
\f(x)=D_h^- \wedge \f(x)$.
\end{center}

By means of the Witt basis $\e_j^\pm$ defined in (\ref{WittBasis}),
the actions $\partial_h^+=D_h^+ \bullet (\cdot)$ and
$\partial_h^-=D_h^- \wedge (\cdot)$ on $\Lambda^*\mathcal{A}_h$
resp. $\Lambda^*_+\mathcal{A}_h\oplus \Lambda^*_-\mathcal{A}_h$
correspond to
\begin{eqnarray}
\label{WittDh}
\partial_h^+= \sum_{j=1}^n \e_j^+ \partial_h^{+j} & ~~\mbox{and}~~ &
\partial_h^-= \sum_{j=1}^n \e_j^- \partial_h^{-j}.
\end{eqnarray}

It is clear from (\ref{DiffPmj})
 that $\partial_{h}^{+j}$ and
$\partial_{h}^{-j}$ are interrelated by the shift operators
$S_h^{\pm j}\f(x)=\f(x\pm h\e_j)$, i.e.
\begin{eqnarray}
\label{TranlationsPmj}
\begin{array}{ccc}
S_h^{-j}(\partial_h^{+j}\f(x))=&\partial_h^{+j}(S_h^{-j}\f(x))=&\partial_h^{-j}\f(x)\\
S_h^{+j}(\partial_h^{-j}\f(x))=&\partial_h^{-j}(S_h^{+j}\f(x))=&\partial_h^{+j}\f(x).
\end{array}
\end{eqnarray}

Moreover, for two Clifford-vector-valued functions $\f(x)$ and
$\g(x)$ the action of each $\partial_h^{-j}/\partial_h^{+j}$ on
$\f(x)\g(x)$ gives rise to the set of product rules
\begin{eqnarray}
\label{productRule}
\begin{array}{ccc}
\partial_h^{+j}\left( \f(x)\g(x)  \right)&=&\partial_h^{+
j}\f(x)~\g(x)+\f(x+h\e_j)\partial_h^{+j}\g(x)\\ \ \\
\partial_h^{- j}\left( \f(x)\g(x)  \right)
&=&\partial_h^{- j}\f(x)~\g(x)+\f(x-h\e_j)\partial_h^{- j}\g(x).
\end{array}
\end{eqnarray}

Now let us examine the actions of $\partial_h^+$ and $\partial_h^-$
on Clifford-vector-valued lattice functions. First, recall that the
Gra\ss maniann identities $\e_j^\pm \e_k^\pm+\e_k^\pm \e_j^\pm=0$
provided by (\ref{WittBasisjk}) lead to the nilpotent relations
$$ \partial_h^+\left(\partial_h^+\f(x)\right)=\partial_h^-\left(\partial_h^-\f(x)\right)=0.$$

Based on (\ref{TranlationsEpmj}) , one can also prove the following
lemma, corresponding to generalized Leibniz rules at the level of
discrete multivector calculus.
\begin{lemma}\label{nilpotencyWittDh}
For two lattice functions $\f(x)$ and $\g(x)$ with membership in
$\Lambda^*\mathcal{A}_h$, we have the generalized Leibniz rules
\begin{eqnarray*}
\partial_h^{+}\left(\f(x)\g(x)\right)&=&(\partial_h^{+}\f)(x)\g(x)+\f(x)'(\partial_h^{+}\g)(x)
\\
\partial_h^{-}\left(\f(x)\g(x)\right)&=&(\partial_h^{-}\f)(x)\g(x)+\f(x)'(\partial_h^{-}\g)(x).
\end{eqnarray*}
\end{lemma}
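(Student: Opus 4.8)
The plan is to reduce both identities to the coordinatewise product rules in (\ref{productRule}) by carefully tracking how the noncommutative actions (\ref{TranlationsEpmj}) move the involution past the shifted arguments. I would work with $\partial_h^+=\sum_{j=1}^n\e_j^+\partial_h^{+j}$ as the model case and treat $\partial_h^-$ by the mirror argument (replacing $\e_j^+$ by $\e_j^-$, forward shifts by backward shifts, and using the second line of (\ref{productRule}) and (\ref{TranlationsEpmj})).

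First I would expand $\partial_h^+(\f(x)\g(x))=\sum_{j=1}^n\e_j^+\,\partial_h^{+j}(\f(x)\g(x))$ and insert the forward product rule from (\ref{productRule}): each summand becomes $\e_j^+\big(\partial_h^{+j}\f(x)\,\g(x)\big)+\e_j^+\big(\f(x+h\e_j)\,\partial_h^{+j}\g(x)\big)$. For the first piece, $\partial_h^{+j}\f(x)$ is again a multivector lattice function, so applying (\ref{TranlationsEpmj}) gives $\e_j^+\big(\partial_h^{+j}\f(x)\,\g(x)\big)=\big(\partial_h^{+j}\f(x)\,\g(x)\big)(x-h\e_j)'\,\e_j^+$; but I must be careful — the displacement $x\mapsto x-h\e_j$ in (\ref{TranlationsEpmj}) acts on the whole function standing to the left of $\e_j^+$, so I actually want to push $\e_j^+$ through in two stages. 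The cleaner route is: write $\e_j^+\f(x+h\e_j)\,\partial_h^{+j}\g(x)$ and observe that by (\ref{TranlationsEpmj}) applied to the lattice function $x\mapsto\f(x+h\e_j)$ one gets $\e_j^+\f(x+h\e_j)=\f((x-h\e_j)+h\e_j)'\,\e_j^+=\f(x)'\,\e_j^+$, so the second summand collapses to $\f(x)'\,\e_j^+\partial_h^{+j}\g(x)$; summing over $j$ yields exactly $\f(x)'(\partial_h^+\g)(x)$. For the first summand, $\e_j^+\partial_h^{+j}\f(x)\,\g(x)$ — here I leave $\e_j^+$ in front, so that $\sum_j\e_j^+\partial_h^{+j}\f(x)=(\partial_h^+\f)(x)$ emerges directly, and the trailing $\g(x)$ is untouched, giving $(\partial_h^+\f)(x)\g(x)$. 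Adding the two contributions gives the claimed rule.

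The point requiring the most care — and the one I expect to be the genuine obstacle — is the bookkeeping of \emph{which} argument the involution-plus-shift in (\ref{TranlationsEpmj}) acts upon when it is iterated or when it meets an already-shifted factor like $\f(x+h\e_j)$. One must justify that $(\cdot)'$ is an algebra homomorphism (the first line of (\ref{involution})) so that it distributes over the product $\f(x+h\e_j)\partial_h^{+j}\g(x)$ only in the form it is actually needed, and that the shift operator and the involution commute with each other as maps on $\Lambda^*\mathcal{A}_h$; once these compatibilities are laid out, the cancellation $\f((x-h\e_j)+h\e_j)'=\f(x)'$ is immediate and the rest is formal. A secondary subtlety is that the product $\f(x)\g(x)$ of two $\cl_{0,n}$-valued functions need not stay Clifford-vector-valued, but (\ref{productRule}) and (\ref{TranlationsEpmj}) were set up precisely to be valid on all of $\Lambda^*\mathcal{A}_h$, so no restriction is lost; I would state this explicitly. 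Finally I would remark that the $\partial_h^-$ identity follows verbatim after swapping $\e_j^+\leftrightarrow\e_j^-$, $S_h^{-j}\leftrightarrow S_h^{+j}$, and using the backward lines of (\ref{TranlationsEpmj}) and (\ref{productRule}).
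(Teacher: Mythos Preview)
Your approach is correct and essentially identical to the paper's: expand $\partial_h^{\pm}$ via (\ref{WittDh}), apply the product rules (\ref{productRule}) to each summand, recognize the first sum as $(\partial_h^{\pm}\f)(x)\g(x)$, and use (\ref{TranlationsEpmj}) on $\e_j^{\pm}\f(x\pm h\e_j)$ to turn the second sum into $\f(x)'(\partial_h^{\pm}\g)(x)$. Your initial detour of trying to push $\e_j^+$ past the full product before self-correcting to ``the cleaner route'' is unnecessary, but the final argument you land on is exactly the paper's.
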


\begin{proof}
Starting from the endomorphism representation provided by
(\ref{WittDh}), the summand(s) splitting
\begin{eqnarray*}
\partial_h^{\pm}(\f(x)\g(x))&=&\sum_{j=1}^n \e_j^{\pm}
\partial_h^{\pm j}\f(x)\g(x)+\sum_{j=1}^n \e_j^{\pm}\f(x\pm h\e_j)(\partial_h^{\pm j}\g)(x)
\end{eqnarray*}
 follows straightforwardly from
 direct application of the product rules (\ref{productRule}) and
from linearity arguments.

The summand(s) $\sum_{j=1}^n \e_j^{\pm} \partial_h^{\pm
j}\f(x)\g(x)$ equals to $(\partial_h^{\pm}\f)(x)\g(x)$ whereas from
noncommutative constraints (\ref{TranlationsEpmj})
\begin{eqnarray*}
\sum_{j=1}^n \e_j^{\pm}\f(x\pm h\e_j)(\partial_h^{\pm j}\g)(x) &=&
\sum_{j=1}^n \f(x)'\e_j^{\pm}(\partial_h^{\pm j}\g)(x) \\
&=& \f(x)'(\partial_h^{\pm}\g)(x).
\end{eqnarray*}
This results into
$\partial_h^{\pm}(\f(x)\g(x))=(\partial_h^{\pm}\f)(x)\g(x)+\f(x)'(\partial_h^{\pm}\g)(x)$,
as desired.
\end{proof}

The above properties altogether show in turn that, for a given
algebra $\mathcal{A}_h$ of real-valued functions over the lattice
$h\BZ^n$ with mesh width $h>0$, the pair
$(\partial_h^-,\Lambda^*\mathcal{A}_h)$ encodes a universal
differential calculus over a hypercubic lattice
(cf.~\cite{DimakisHoissen94,Vaz97,KK04}) whereas $\partial_h^+$
plays the role of the codifferential operator. Indeed, one has the
raising and lowering properties, $\partial_h^{-}: \Lambda^r
\mathcal{A}_h \mapsto \Lambda^{r+1} \mathcal{A}_h$ and
$\partial_h^{+}: \Lambda^r \mathcal{A}_h \mapsto \Lambda^{r-1}
\mathcal{A}_h$, respectively.

Having in mind the Dirac-K\"ahler formalism over differential forms
(cf. \cite[Subsection 4.3.3]{MontvayMunster94}), one can introduce
the finite difference Dirac operator $D_h$ over
$\Lambda_+^*\mathcal{A}_h\oplus \Lambda_-^*\mathcal{A}_h$ as
$D_h=\partial_h^- - \partial_h^{+}$.

It is easy to see from the splittings $\e_j=\e_j^--\e_j^+$ and
$\e_{j+n}=\e_j^-+\e_j^+$ provided by (\ref{WittBasis}) that $D_h$
admits the following $\cl_{n,n}$ based representation
\begin{eqnarray}
\label{DhClnn}
D_h=\sum_{j=1}^n\e_j\frac{\partial_h^{-j}+\partial_h^{+j}}{2}+
\e_{j+n}\frac{\partial_h^{-j}-\partial_h^{+j}}{2}.
\end{eqnarray}

The above formula, corresponding to the decomposition of $D_h$
through $\cl_{n,n}$ into a symmetric plus a skew-symmetric part,
roughly shows that $D_h$ also gives rise to a finite difference
discretization of $\displaystyle D=\sum_{j=1}^n\e_j
\partial_{x_j}$ (cf. \cite{FKS07}) for which the symmetric part corresponds to the central finite
difference Dirac operator $\frac{1}{2}\left(D_h^- + D_h^+\right)$
and the skew-symmetric part equals to $\frac{h}{2}\square_h$, where
$$\square_h=-\sum_{j=1}^n \e_{j+n}
\partial_h^{-j}\partial_h^{+j}$$
denotes the $\cl_{n,0}-$valued extension of the lattice d'Alembert
operators
$-\partial_h^{-j}\partial_h^{+j}=\frac{1}{h}\left(\partial_{h}^{-j}-\partial_{h}^{+j}\right)$
(cf.~\cite[Subsection 1.5.1]{MontvayMunster94}). 

There many basic properties regarding the multivector operators
(\ref{WittDh}) and (\ref{DhClnn}) that can be formulated only in
terms of the graded commuting relations (\ref{WittBasisjk}) and
(\ref{CliffordBasis}), respectively. The next proposition, involving
the factorization of the star Laplacian
\begin{eqnarray}
\label{starLapl}\Delta_h \f(x)=\sum_{j=1}^n
\dfrac{\f(x+h\e_j)+\f(x-h\e_j)-2\f(x)}{h^2}
\end{eqnarray}
will be of special interest on the subsequent section.

\begin{proposition}\label{starLaplFactorization}
For the finite difference discretization of $\Delta_h$ defined by
equation (\ref{starLapl}) it holds
\begin{eqnarray*}
\partial_h^+
\left(\partial_h^-\f(x)\right)+ \partial_h^-
\left(\partial_h^+\f(x)\right)=\Delta_h\f(x)
\\
D_h\left(D_h\f(x)\right)=-\Delta_h\f(x).
\end{eqnarray*}
\end{proposition}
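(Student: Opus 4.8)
The plan is to establish both identities by reducing them to the anticommutation relations among the Witt basis elements, exactly as the excerpt already did for the nilpotency relations $\partial_h^+\partial_h^+ = \partial_h^-\partial_h^- = 0$. First I would expand $\partial_h^+(\partial_h^-\f)(x) + \partial_h^-(\partial_h^+\f)(x)$ using the endomorphism representations $\partial_h^+ = \sum_{j} \e_j^+ \partial_h^{+j}$ and $\partial_h^- = \sum_{j} \e_j^- \partial_h^{-j}$ from (\ref{WittDh}). Care is needed here: the noncommutative translation rules (\ref{TranlationsEpmj}) mean that $\e_j^\pm$ does not commute past a multivector function, so when I compose $\e_j^+ \partial_h^{+j}$ with $\e_k^- \partial_h^{-k}$ I must track the induced shift. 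However, because $\partial_h^{+j}$ and $\partial_h^{-k}$ commute with each other and with shifts (cf.\ (\ref{TranlationsPmj})), and $\partial_h^{-j}(S_h^{+j}) = \partial_h^{+j}$, the mixed terms should reassemble cleanly; in particular the diagonal terms $j = k$ produce $(\e_j^+\e_j^- + \e_j^-\e_j^+)\partial_h^{+j}\partial_h^{-j}\f(x)$.

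The key step is then invoking the third relation in (\ref{WittBasisjk}), namely $\e_j^-\e_j^+ + \e_j^+\e_j^- = \delta_{jj} = 1$, so the diagonal terms collapse to $\sum_{j=1}^n \partial_h^{+j}\partial_h^{-j}\f(x)$. Using the identity $\partial_h^{+j}\partial_h^{-j}\f(x) = \frac{\f(x+h\e_j) + \f(x-h\e_j) - 2\f(x)}{h^2}$, which follows directly from the coordinate formulae (\ref{DiffPmj}), this sum is precisely $\Delta_h\f(x)$ as defined in (\ref{starLapl}). For the off-diagonal terms $j \neq k$, I would pair the contribution from $\e_j^+\partial_h^{+j} \circ \e_k^-\partial_h^{-k}$ with that from $\e_k^-\partial_h^{-k} \circ \e_j^+\partial_h^{+j}$; after accounting for the shifts via (\ref{TranlationsEpmj}) and (\ref{TranlationsPmj}), the operator coefficients combine as $\e_j^+\e_k^- + \e_k^-\e_j^+ = \delta_{jk} = 0$ by (\ref{WittBasisjk}), so these terms vanish. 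This yields the first identity.

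For the second identity, I would use $D_h = \partial_h^- - \partial_h^+$ and expand
\begin{eqnarray*}
D_h(D_h\f(x)) = \partial_h^-(\partial_h^-\f(x)) - \partial_h^-(\partial_h^+\f(x)) - \partial_h^+(\partial_h^-\f(x)) + \partial_h^+(\partial_h^+\f(x)).
\end{eqnarray*}
The first and last terms vanish by the nilpotency relations already recorded in the excerpt, and the two middle terms sum to $-\bigl(\partial_h^-(\partial_h^+\f(x)) + \partial_h^+(\partial_h^-\f(x))\bigr) = -\Delta_h\f(x)$ by the first identity. This completes the proof.

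\textbf{Main obstacle.} The only delicate point is the bookkeeping of shifts in the off-diagonal terms: because the Witt basis elements act on multivector functions with a built-in translation and involution by (\ref{TranlationsEpmj}), one must verify that after moving $\e_j^\pm$ to the left the remaining difference operators still reorganize into the symmetric combination whose anticommutator coefficient is $\delta_{jk}$. I expect this to work because shifts, the main involution, and the finite difference operators all commute with one another, so the shift picked up on the $j$-th axis is identical on both sides of the pairing and cancels; but it is the step that requires the most attention to avoid sign or index errors.
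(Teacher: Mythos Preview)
Your argument for the first identity is essentially the paper's own: expand via (\ref{WittDh}), invoke $\e_j^+\e_k^- + \e_k^-\e_j^+ = \delta_{jk}$ from (\ref{WittBasisjk}), and identify $\partial_h^{+j}\partial_h^{-j}$ with the $j$-th summand of (\ref{starLapl}). Your concern about tracking the shifts from (\ref{TranlationsEpmj}) is over-cautious here: that rule governs how $\e_j^\pm$ commutes past \emph{multiplication} by a multivector function (as in the proof of Lemma~\ref{nilpotencyWittDh}), whereas in the present computation the $\e_j^\pm$ only need to commute with the scalar finite-difference operators $\partial_h^{\pm k}$, which they do trivially --- exactly the manoeuvre already used in the text to derive the nilpotency relations just before Lemma~\ref{nilpotencyWittDh}.

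For the second identity your route differs from the paper's main proof. The paper works directly from the $\cl_{n,n}$ representation (\ref{DhClnn}) together with the algebraic identity $\partial_h^{+j}\partial_h^{-j} = \bigl(\tfrac{\partial_h^{-j}+\partial_h^{+j}}{2}\bigr)^2 - \bigl(\tfrac{\partial_h^{-j}-\partial_h^{+j}}{2}\bigr)^2$, and then applies $\e_j^2=-1$, $\e_{j+n}^2=1$ from (\ref{CliffordBasis}). Your route --- expand $D_h^2 = (\partial_h^- - \partial_h^+)^2$, kill the squares by nilpotency, and reduce the cross terms to the first identity --- is shorter and is precisely the alternative the paper records in the remark immediately after the proposition. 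Both are valid; yours avoids the $\cl_{n,n}$ machinery entirely, while the paper's version makes the link to the decomposition (\ref{DhClnn}) explicit.
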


\begin{proof}
Recall that from (\ref{DiffPmj}) one has, for each $j=1,2,\ldots,n$,
the factorization relations
$$ \partial_h^{+j}\left(\partial_h^{-j}\f(x)\right)=\dfrac{\f(x+h\e_j)+\f(x-h\e_j)-2\f(x)}{h^2}.$$

 On the other hand, since from (\ref{TranlationsPmj})
the forward and backward operators $\partial_h^{+j}/\partial_h^{-j}$
mutually commute, it follows from the duality relations $\e_j^+
\e_k^-+\e_k^- \e_j^+=\delta_{jk}$ provided by (\ref{WittBasisjk}),
the set of identities
\begin{eqnarray*}
\Delta_h \f(x)&=&\sum_{j,k=1}^n \left(\e_j^+ \e_k^-+\e_k^-
\e_j^+\right)\partial_h^{+j}(\partial_h^{-k}\f(x))\\
&=&\partial_h^{+}(\partial_h^{-}\f(x))
+\partial_h^{-}(\partial_h^{+}\f(x)).
\end{eqnarray*}

Similarity, the combination of (\ref{CliffordBasis}) with the finite
difference property
\begin{center}
 $\displaystyle
\partial_h^{+j}\partial_h^{-j}=
\left( \frac{\partial_h^{-j}+\partial_h^{+j}}{2}\right)^2-\left(
\frac{\partial_h^{-j}-\partial_h^{+j}}{2}\right)^2$
\end{center}
lead to
\begin{eqnarray*}
-\Delta_h \f(x)&=&\sum_{j=1}^n \e_{j}^2
\left(\frac{\partial_h^{-j}+\partial_h^{+j}}{2}\right)^2\f(x)+\e_{j+n}^2
\left(\frac{\partial_h^{-j}-\partial_h^{+j}}{2}\right)^2\f(x)
\\
&=&-D_h(D_h\f(x)).
\end{eqnarray*}
\end{proof}

\begin{remark}
The statement $D_h(D_h\f(x))=-\Delta_h \f(x)$ also yields as direct
consequence of the combination of the formula $\partial_h^+
\left(\partial_h^-\f(x)\right)+ \partial_h^-
\left(\partial_h^+\f(x)\right)=-D_h\left(D_h\f(x)\right)=\Delta_h\f(x)$
with Lemma \ref{nilpotencyWittDh}.
\end{remark}

\section{Klein-Gordon and Dirac equations on the lattice}\label{KleinGordonDiracSection}

\subsection{The factorization approach}

After the preliminary construction provided in Section
\ref{MultivectorCalculusLattice} one now enters into the heart of
the matter. Our first task consists in the exact formulation of an
exact discretized model for the Dirac equation. Unlike the
Dirac-K\"ahler formulations \cite{BecherJoos82,BorsNielsen00,KK04}
one will incorporate the {\it local} unitary action $\chi_h(x)$ on
$\Lambda_+^* \mathcal{A}_h\oplus\Lambda_-^* \mathcal{A}_h$ to ensure
the factorization of the discretized Klein-Gordon operator
$-\Delta_h+m^2$ provided by the equation (\ref{KleinGordonLattice})
in terms of a discretized Dirac-field operator. Such action is close
to the Dirac-Hestenes spinor field action (cf. \cite[Subsection
3.7]{RodriguesOliveira07}).

To this end, we consider for a multivector function $\f_{+-}(x)$
with membership in $\Lambda_+^* \mathcal{A}_h\oplus\Lambda_-^*
\mathcal{A}_h$, the following discretized Dirac equation on $h\BZ^n$
for a {\it free} particle with mass $m$:
\begin{eqnarray}
\label{DhMass} D_h ~\f_{+-}(x)=m~\chi_h(x)~\f_{+-}(x).
\end{eqnarray}

 From the substitution provided
by (\ref{ClnnDirectSum}) on both sides of (\ref{DhMass}), it can be
easily concluded, based on the idempotency relation
$\chi_h(x)\left(\chi_h(x) \f(x)\right)=\f(x)$, the following
equivalent formulation as a coupled system of equations
\begin{eqnarray}
\label{DhMassCoupled} \left\{\begin{array}{lll}
D_h ~\f_+(x)&=&m~\f_+(x) \\
 D_h~
\f_-(x)&=&~-m~\f_-(x)
\end{array}\right.
\end{eqnarray}
so that the solutions $\f_+(x)$ and $\f_-(x)$ of
(\ref{DhMassCoupled}) are the chiral and the achiral components of
the spinor vector-field $\f_{+-}(x)$.

Considering simply the multivector finite difference operator
$D_h-m\chi_h(x)$, the solutions of (\ref{DhMass}) can be described
in terms of the linear space $\ker\left(D_h-m\chi_h(x)\right)$
whereas the linear spaces $\frac{1}{2}\left(1\pm
\chi_h(x)\right)\left[\ker\left(D_h-m\chi_h(x)\right)\right]$ of
chiral/achiral type contain the solutions of (\ref{DhMassCoupled}).
In addition, the $\cl_{n,n}$-valued representation underlying $D_h$
and $\chi_h(x)$ allows us to derive the subsequent set of results in
a reliable way.
\begin{proposition}\label{DiracLatticeProposition}
When acting on the multivector space $\Lambda_+^*
\mathcal{A}_h\oplus\Lambda_-^* \mathcal{A}_h$, the discretized
Dirac-field operator $D_h-m\chi_h(x)$ satisfies
$$\left(D_h-m\chi_h(x)\right)^2=-\Delta_h+m^2.$$ Moreover, one has
$$\ker\left(D_h-m\chi_h(x)\right)
=\left(D_h-m\chi_h(x)\right)\left[\ker\left(-\Delta_h+m^2\right)\right].$$
\end{proposition}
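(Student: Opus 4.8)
The plan is to establish the two assertions in turn, since the second one rests on the first. For the identity $\left(D_h-m\chi_h(x)\right)^2=-\Delta_h+m^2$, I would expand the square as
$$
\left(D_h-m\chi_h(x)\right)^2=D_h D_h - m\,D_h\chi_h(x) - m\,\chi_h(x)D_h + m^2\chi_h(x)^2,
$$
and then dispose of the four summands separately. By Proposition \ref{starLaplFactorization} the first term is $-\Delta_h$, and by the computation $\chi_h(x)^2=1$ carried out after \eqref{staggeredFermion} the last term is $m^2$. The crux is therefore to show that the cross terms cancel, i.e. that $D_h\chi_h(x) + \chi_h(x)D_h = 0$ as operators on $\Lambda_+^*\mathcal{A}_h\oplus\Lambda_-^*\mathcal{A}_h$. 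This is the anticommutation of $D_h$ with the local unitary action $\chi_h(x)$, and I expect it to be the main obstacle, because $\chi_h(x)$ combines a sign factor $(-1)^{x_j/h}$ depending on the lattice point with the bivector operators $[\e_j^+,\e_j^-]=\e_{j+n}\e_j$ of $\cl_{n,n}$, so one must track carefully how $D_h$ interacts with both ingredients.

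To handle the anticommutator I would work componentwise, writing $\chi_h(x)=\prod_{j=1}^n(-1)^{x_j/h}\e_{j+n}\e_j$ and $D_h=\sum_{k=1}^n\e_k\frac{\partial_h^{-k}+\partial_h^{+k}}{2}+\e_{k+n}\frac{\partial_h^{-k}-\partial_h^{+k}}{2}$ via \eqref{DhClnn}. The key point is a parity observation: applying a single forward or backward shift in the $x_k$-direction flips the sign $(-1)^{x_k/h}\mapsto -(-1)^{x_k/h}$, so that $\chi_h(x)$ anticommutes with each finite difference operator $\partial_h^{\pm k}$ in the sense $\chi_h(x)\partial_h^{\pm k}\f(x)=-\partial_h^{\pm k}\big(\chi_h(x)\f(x)\big)$ up to the passage of the bivector factors through. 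On the Clifford side, the generator $\e_k$ (resp. $\e_{k+n}$) commutes with the bivector $\e_{j+n}\e_j$ for $j\neq k$ and anticommutes with $\e_{k+n}\e_k$ by \eqref{CliffordBasis}; combining one sign flip from the shift parity with one sign flip from the Clifford anticommutation produces an overall $+$, forcing $D_h\chi_h(x)+\chi_h(x)D_h=0$. Assembling these pieces gives $\left(D_h-m\chi_h(x)\right)^2=-\Delta_h+m^2$.

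For the second assertion I would argue by double inclusion. Write $P=D_h-m\chi_h(x)$, so that the first part says $P^2=-\Delta_h+m^2$. If $\g\in\ker P$, then trivially $(-\Delta_h+m^2)\g=P(P\g)=0$, so $\g\in\ker(-\Delta_h+m^2)$, and since $P\g=0$ we may write $\g=\g-P\g$; but in fact the cleaner route is: $P\big[\ker(-\Delta_h+m^2)\big]\subseteq\ker P$ because for $\f\in\ker(-\Delta_h+m^2)$ one has $P(P\f)=(-\Delta_h+m^2)\f=0$, giving $P\f\in\ker P$. Conversely, if $\g\in\ker P$, I would exhibit $\g$ itself as $P$ applied to something in $\ker(-\Delta_h+m^2)$: since $P$ is invertible as a scalar... no — rather, one takes $\f$ with $P\f=\g$; such $\f$ need not exist globally, so instead I would use that $\ker P\subseteq\ker P^2=\ker(-\Delta_h+m^2)$, and then observe that for $\g\in\ker P$ one trivially has $\g = -P\big(\tfrac{1}{?}\cdots\big)$ — the correct and intended argument is: $\ker P \subseteq \ker(-\Delta_h+m^2)$ from the first part, hence for any $\g\in\ker P$, applying $P$ to $\g\in\ker(-\Delta_h+m^2)$ gives $P\g=0$, which shows $0\in P[\ker(-\Delta_h+m^2)]$ but not $\g$. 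I expect the author actually proves the reverse inclusion via a projection/lifting argument exploiting that $\ker P^2 = \ker P \oplus P[\ker P^2]$ when $P$ restricted to the relevant space behaves like a nilpotent-plus-scalar, or simply notes $\g = P\f$ with $\f := $ a chosen preimage inside $\ker(-\Delta_h+m^2)$ built from the chiral decomposition \eqref{ClnnProjection}; I would follow that route, checking that the preimage indeed lies in $\ker(-\Delta_h+m^2)$ using $P^2\f = P\g = 0$.
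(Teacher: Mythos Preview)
Your strategy for the first identity---expand the square, invoke $D_h^2=-\Delta_h$ (Proposition~\ref{starLaplFactorization}) and $\chi_h(x)^2=1$, then kill the cross terms via $D_h\chi_h(x)+\chi_h(x)D_h=0$---is exactly the paper's. But your anticommutation argument is both more elaborate than needed and internally inconsistent: you assert that a parity sign from the shift and a Clifford sign combine to ``an overall $+$'', which would yield \emph{commutation}, not the anticommutation you then claim. The paper does not invoke the parity of $(-1)^{x_j/h}$ at all. It argues purely from the Clifford relations \eqref{CliffordBasis}: since $\e_j(\e_{j+n}\e_j)=-(\e_{j+n}\e_j)\e_j$ and $\e_{j+n}(\e_{j+n}\e_j)=-(\e_{j+n}\e_j)\e_{j+n}$ for each $j$ (and each generator commutes with the remaining bivector factors $\e_{k+n}\e_k$, $k\neq j$), every $\e_j$ and $\e_{j+n}$ anticommutes with $\chi_h(x)$, whence $D_h\chi_h(x)=-\chi_h(x)D_h$. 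Your intermediate claim that each $\partial_h^{\pm k}$ individually anticommutes with multiplication by $\chi_h(x)$ is in fact false: a direct computation gives $\partial_h^{+k}\big(\chi_h(x)\f(x)\big)=-\chi_h(x)\,\frac{\f(x+h\e_k)+\f(x)}{h}$, which is not $\pm\chi_h(x)\partial_h^{+k}\f(x)$.

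For the kernel identity, write $P=D_h-m\chi_h(x)$. Your ``cleaner route'' correctly proves $P\big[\ker(-\Delta_h+m^2)\big]\subseteq\ker P$, and this matches the paper's second half verbatim. The reverse inclusion is where your proposal has a genuine gap: none of your attempted constructions of a preimage close, and you end with speculation. The paper's own argument for $\ker P\subseteq P\big[\ker(-\Delta_h+m^2)\big]$ is extremely terse: it observes that $P^2=-\Delta_h+m^2$ forces $\ker P\subseteq\ker(-\Delta_h+m^2)$ and then asserts the desired inclusion without exhibiting a preimage. If you want to make that step explicit (the paper does not), note that for $m\neq 0$ and $\g\in\ker P$ the choice $\f=-\dfrac{1}{2m}\chi_h(x)\g$ works: using $D_h\chi_h(x)=-\chi_h(x)D_h$ together with $D_h\g=m\chi_h(x)\g$ one checks $P\f=\g$, and $\f\in\ker(-\Delta_h+m^2)$ since $P^2\f=P\g=0$.
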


\begin{proof}
From the direct computation
\begin{eqnarray*}\left(D_h-m\chi_h(x)\right)^2
=D_h^2-mD_h\chi_h(x)-m\chi_h(x)D_h+m^2\chi_h(x)^2
\end{eqnarray*}
 one easily
recognizes the star Laplacian splitting
$\left(D_h\right)^2=-\Delta_h$ (Lemma \ref{starLaplFactorization})
and the unitary relation $\chi_h(x)^2=1$ as well.

On the other hand, from (\ref{CliffordBasis}) we observe that the
set of anti-commuting relations
$\e_j(\e_{j+n}\e_j)=-(\e_{j+n}\e_j)\e_j$ and
$\e_{j+n}(\e_{j+n}\e_j)=-(\e_{j+n}\e_j)\e_{j+n}$ holds for each
$j=1,2,\ldots,n$. This results into the following set of graded
anti-commuting relations
\begin{center}
$\e_j\chi_h(x)=-\chi_h(x)\e_j$ and
$\e_{j+n}\chi_h(x)=-\chi_h(x)\e_{j+n}$,
\end{center}
and moreover, into the basic identity $D_h\chi_h(x)=-\chi_h(x)D_h$.
Whence
$$\left(D_h-m\chi_h(x)\right)^2=-\Delta_h+m^2.$$

For the proof of
$\ker\left(D_h-m\chi_h(x)\right)=\left(D_h-m\chi_h(x)\right)\left[\ker\left(-\Delta_h+m^2\right)\right]$
one recall first that from the factorization
$\left(D_h-m\chi_h(x)\right)^2=-\Delta_h+m^2$, each solution of
(\ref{DhMass}) is also a solution of the discretized Klein-Gordon
equation (\ref{KleinGordonLattice}), and thus, the linear space
$\ker\left(D_h-m\chi_h(x)\right)$ is a subspace of
$\ker\left(-\Delta_h+m^2\right)$.

Therefore, we obtain the inclusion
$$\ker\left(D_h-m\chi_h(x)\right)\subset
\left(D_h-m\chi_h(x)\right)\left[\ker\left(-\Delta_h+m^2\right)\right].$$

Conversely, let $\f_{+-}(x)$ be a multivector function with
membership in
$\left(D_h-m\chi_h(x)\right)\left[\ker\left(-\Delta_h+m^2\right)\right]$,
that is $$\f_{+-}(x)=D_h\g(x)-m\chi_h(x)\g(x),$$ with
$\Delta_h\g(x)=m^2\g(x)$.

By applying $D_h-m\chi_h(x)$ to $\f_{+-}(x)$, it follows then
\begin{eqnarray*}
D_h \f_{+-}(x)-m\chi_h(x)\f_{+-}(x)&=&-\Delta_h\g(x)+m^2\g(x)=0.
\end{eqnarray*}

Thus $\f_{+-}(x)\in \ker\left(D_h-m\chi_h(x)\right)$ and whence
\begin{center}
$\left(D_h-m\chi_h(x)\right)\ker\left(-\Delta_h+m^2\right)
\subset\ker\left(D_h-m\chi_h(x)\right)$.
\end{center}
\end{proof}

\begin{corollary}\label{DiracLatticeCorollary}
Let $\g(x)\in \Lambda^* \mathcal{A}_h$ be a solution of the
discretized Klein-Gordon equation
$$ \Delta_h \g(x)=m^2\g(x).$$
Then, the solutions of the coupled system (\ref{DhMassCoupled}) are
given by
\begin{eqnarray*}
\begin{array}{lll} \f_+(x)
&=&\frac{1}{2}\left(1+\chi_h(x)\right)\left(D_h\g(x)-m\g(x)\right)
\\ \ \\
\f_-(x)
&=&\frac{1}{2}\left(1-\chi_h(x)\right)\left(D_h\g(x)-m\g(x)\right).
\end{array}
\end{eqnarray*}
\end{corollary}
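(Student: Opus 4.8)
The plan is to derive the corollary as a direct application of Proposition~\ref{DiracLatticeProposition} together with the projection machinery set up in Subsection~\ref{discreteMultivectorSubs}. First I would invoke Proposition~\ref{DiracLatticeProposition} to observe that, since $\g(x)$ solves $\Delta_h\g(x)=m^2\g(x)$, i.e. $\g(x)\in\ker\left(-\Delta_h+m^2\right)$, the multivector function
$$
\f_{+-}(x):=\left(D_h-m\chi_h(x)\right)\g(x)=D_h\g(x)-m\chi_h(x)\g(x)
$$
lies in $\ker\left(D_h-m\chi_h(x)\right)$; that is, $\f_{+-}(x)$ solves the discretized Dirac equation \eqref{DhMass}. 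By the equivalence between \eqref{DhMass} and the coupled system \eqref{DhMassCoupled} established just before the proposition, its chiral and achiral components $\f_\pm(x)=\frac{1}{2}\left(1\pm\chi_h(x)\right)\f_{+-}(x)$ solve \eqref{DhMassCoupled}.

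The one technical point is to rewrite $\frac{1}{2}\left(1\pm\chi_h(x)\right)\f_{+-}(x)$ in the claimed form, in which $\chi_h(x)\g(x)$ has been replaced by $m\g(x)$ inside the parentheses. For this I would expand
$$
\tfrac{1}{2}\left(1\pm\chi_h(x)\right)\left(D_h\g(x)-m\chi_h(x)\g(x)\right)
$$
and use the anticommutation identity $D_h\chi_h(x)=-\chi_h(x)D_h$ derived in the proof of Proposition~\ref{DiracLatticeProposition} together with $\chi_h(x)^2=1$. Concretely, $\chi_h(x)D_h\g(x)=-D_h\chi_h(x)\g(x)$ and $\chi_h(x)\bigl(m\chi_h(x)\g(x)\bigr)=m\g(x)$, so that
$$
\chi_h(x)\f_{+-}(x)=-D_h\chi_h(x)\g(x)-m\g(x).
$$
Adding and subtracting this from $\f_{+-}(x)=D_h\g(x)-m\chi_h(x)\g(x)$ and regrouping, the terms involving $\chi_h(x)$ reorganize so that $\frac{1}{2}\left(1\pm\chi_h(x)\right)\f_{+-}(x)=\frac{1}{2}\left(1\pm\chi_h(x)\right)\left(D_h\g(x)-m\g(x)\right)$; the key cancellation is that inside the idempotent projector one may freely trade $D_h\g(x)-m\chi_h(x)\g(x)$ for $D_h\g(x)-m\g(x)$ because their difference $m\bigl(\g(x)-\chi_h(x)\g(x)\bigr)$ is, up to sign, annihilated after the regrouping — more transparently, $\left(1\pm\chi_h(x)\right)D_h\g(x)=D_h\bigl(1\mp\chi_h(x)\bigr)\g(x)$ and $\left(1\pm\chi_h(x)\right)\chi_h(x)\g(x)=\pm\left(1\pm\chi_h(x)\right)\g(x)$, which I would combine to reach the stated expressions.

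I do not expect any serious obstacle here: the result is essentially a bookkeeping corollary of Proposition~\ref{DiracLatticeProposition}. The only place requiring a little care is keeping track of signs when pushing $\chi_h(x)$ past $D_h$ inside the projectors $\frac{1}{2}\left(1\pm\chi_h(x)\right)$, since these signs flip the chirality label; I would verify that the $+$ projector indeed produces the $+m$ equation of \eqref{DhMassCoupled} and the $-$ projector the $-m$ equation, which is consistent with the computation $D_h\f_\pm(x)=\pm m\f_\pm(x)$ already recorded there. Finally, I would note for completeness that these are \emph{all} solutions of the coupled system, since by Proposition~\ref{DiracLatticeProposition} every element of $\ker\left(D_h-m\chi_h(x)\right)$ arises as $\left(D_h-m\chi_h(x)\right)\g(x)$ for some $\g(x)\in\ker\left(-\Delta_h+m^2\right)$, and applying the two projectors recovers exactly the chiral/achiral decomposition.
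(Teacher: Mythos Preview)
Your overall strategy is precisely the one the paper has in mind: the corollary is stated without proof immediately after Proposition~\ref{DiracLatticeProposition}, and the later application in Subsection~3.2 (``a short computation based on projection arguments and on the intertwining property $D_h\chi_h(x)=-\chi_h(x)D_h$'') confirms that one is meant to project $\f_{+-}(x)=(D_h-m\chi_h(x))\g(x)$ via $\tfrac{1}{2}(1\pm\chi_h(x))$ and simplify using $\chi_h(x)^2=1$ together with the intertwining relation. Your invocation of Proposition~\ref{DiracLatticeProposition} for both directions (producing solutions and showing they are all of this form) is also the right move.

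There is, however, a genuine sign bookkeeping gap in your final simplification. Your own identity $(1\pm\chi_h(x))\chi_h(x)\g(x)=\pm(1\pm\chi_h(x))\g(x)$ is correct, but it does \emph{not} give the same answer in both cases. For the $+$ projector it yields
\[
\tfrac{1}{2}\bigl(1+\chi_h(x)\bigr)\bigl(D_h\g(x)-m\chi_h(x)\g(x)\bigr)=\tfrac{1}{2}\bigl(1+\chi_h(x)\bigr)\bigl(D_h\g(x)-m\g(x)\bigr),
\]
as claimed, but for the $-$ projector the same identity produces
\[
\tfrac{1}{2}\bigl(1-\chi_h(x)\bigr)\bigl(D_h\g(x)-m\chi_h(x)\g(x)\bigr)=\tfrac{1}{2}\bigl(1-\chi_h(x)\bigr)\bigl(D_h\g(x)+m\g(x)\bigr),
\]
with $+m\g(x)$ rather than $-m\g(x)$. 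So the sentence ``which I would combine to reach the stated expressions'' does not go through uniformly: the two chiral components acquire opposite signs on the mass term, and you need to track this carefully against the precise form of the coupled system \eqref{DhMassCoupled}. This is exactly the ``little care'' you anticipated, and it is not resolved by the argument as written; in particular, remember that because $D_h\chi_h(x)=-\chi_h(x)D_h$, the operator $D_h$ interchanges $\Lambda_+^*\mathcal{A}_h$ and $\Lambda_-^*\mathcal{A}_h$, so the decoupling you assert for $D_h\f_\pm=\pm m\f_\pm$ also deserves a closer look.
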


\begin{remark}
The above corollary is also valid for solutions of the discretized
Klein-Gordon equation (\ref{KleinGordonLattice}) with membership in
$\Lambda^*_+ \mathcal{A}_h\oplus\Lambda^*_- \mathcal{A}_h$.
\end{remark}

\subsection{Hypercomplex extension of Chebyshev polynomials}

As an hypercomplex extension for the Chebyshev polynomials of the
first kind labeled by (\ref{ChebyshevPoly}) we can define, for each
$x \in h\BZ^n$, $y,\alpha\in\BR^n$ and $\a\in \cl_{0,n}$, the
Clifford-vector-valued polynomial $\Th^{(\alpha)}(x,y;\a)$ by the
multi-variable formula
\begin{eqnarray}
\label{hypercomplexChebyshev}
\Th^{(\alpha)}(x,y;\a)=
\prod_{j=1}^n{~}_2F_1
\left(-\frac{x_j}{h},\frac{x_j}{h};\frac{1}{2},\frac{1-y_j}{2}+\alpha_j~\chi_h(x)
\right)\a.
\end{eqnarray}

Here and elsewhere ${~}_2F_1$ denotes the hypergeometric series
expansion defined from the Pochhammer symbol $(\lambda)_s
=\frac{\Gamma(\lambda+s)}{\Gamma(\lambda)}$ by the formula
$$
{~}_2F_1(b_1,b_2;c_1;t)=\sum_{s=0}^\infty
\frac{(b_1)_s~(b_2)_s}{(c_1)_s}~\frac{t^s}{s!}.
$$
\begin{remark}
A striking aspect besides this construction is that the term
$\frac{1-y_j}{2}+\alpha_j~\chi_h(x)$ within each ${~}_2F_1$
representation is indeed an element of
$\mbox{End}(\Lambda^*\mathcal{A}_h)$ so that each
${~}_2F_1\left(-\frac{x_j}{h},\frac{x_j}{h};\frac{1}{2},\frac{1-y_j}{2}+\alpha_j~\chi_h(x)
\right)$ shall be understood as an operational action. This means
that the right-hand side of (\ref{hypercomplexChebyshev}) is in fact
a $n-$composition formula.
\end{remark}

Let us now take a close look to the multi-variable representation of
(\ref{hypercomplexChebyshev}) in terms of the classical Chebyshev
polynomials of the first kind $T_k(\lambda)$. For $\alpha={\bf 0}$
(the zero $n-$vector), the symmetric identity ${~}_2F_1\left(
-k,k;\frac{1}{2};\frac{1-\lambda}{2} \right)={~}_2F_1\left(
k,-k;\frac{1}{2};\frac{1-\lambda}{2} \right)$ allows us to extend
the Chebyshev polynomials also for negative integers. That is,
putting $k=\frac{|x_j|}{h}$ on the right-hand side of
(\ref{ChebyshevPoly}) one always have the identity
$T_{\frac{|x_j|}{h}}(y_j)={~}_2F_1\left(-\frac{x_j}{h},\frac{x_j}{h};\frac{1}{2},\frac{1-y_j}{2}
\right)$.

So one has
\begin{eqnarray}
\label{hypercomplexChebyshevMultiVariable} \Th^{({\bf
0})}(x,y;\a)=\frac{\a}{2^n}\prod_{j=1}^n
\left(y_j+\sqrt{y_j^2-1}\right)^{\frac{x_j}{h}}+
\left(y_j-\sqrt{y_j^2-1}\right)^{\frac{x_j}{h}}.
\end{eqnarray}

For $\alpha \neq {\bf 0}$, the idempotency of $\frac{1\pm
\chi_h(x)}{2}$ allows us to represent, for suitable values of $y$,
the hypercomplex polynomials (\ref{hypercomplexChebyshev}) as
elements with membership in $\Lambda_+^*\mathcal{A}_h$ or in
$\Lambda_-^*\mathcal{A}_h$. So one starts with the binomial identity
\begin{eqnarray*}
\left( \frac{1-\lambda}{2}+\mu \frac{1\pm \chi_h(x)}{2} \right)^s
&=&\sum_{r=0}^s \left(\begin{array}{lll}
s \\
r
\end{array}\right)
\left( \frac{1-\lambda}{2}\right)^{s-r} \mu^r \frac{1\pm \chi_h(x)}{2}\\
&=&\frac{1\pm \chi_h(x)}{2}\left( \frac{1-\lambda}{2}+\mu\right)^s.
\end{eqnarray*}

Application of the above formula results, after a straightforwardly
computation based on linearity arguments, into the hypergeometric
identity
\begin{eqnarray*}
{~}_2F_1\left(-k,k;\frac{1}{2},\frac{1-\lambda}{2}+\mu\frac{1\pm
\chi_h(x)}{2}\right)=\frac{1\pm
\chi_h(x)}{2}{~}_2F_1\left(-k,k;\frac{1}{2},\frac{1-\lambda}{2}+\mu\right).
\end{eqnarray*}

Recalling (\ref{hypercomplexChebyshev}) one finds for the
substitutions $\lambda=y_j+2\alpha_j$, $\mu=\pm \alpha_j$, and
$k=\frac{x_j}{h}$, the set of identities
\begin{eqnarray*}
{~}_2F_1\left(-\frac{x_j}{h},\frac{x_j}{h};\frac{1}{2},\frac{1-y_j-2\alpha_j}{2}\pm
2\alpha_j\frac{1\pm \chi_h(x)}{2}
  \right)=\\
  =\frac{1}{2}\left(T_{\frac{|x_j|}{h}}(y_j+ 2\alpha_j)\pm
\chi_h(x)T_{\frac{|x_j|}{h}}(y_j+ 2\alpha_j)\right).
\end{eqnarray*}

After some straightforward manipulations based again on the
idempotency of $\displaystyle \frac{1\pm \chi_h(x)}{2}$, one thus
gets, for $\alpha \neq {\bf 0}$, the set of relations
\begin{eqnarray}
\label{hypercomplexChebyshevProjection}
\begin{array}{lll}
\Th^{(\alpha)}(x,y;\a)=\frac{1}{2}\left(\Th^{({\bf
0})}(x,y+2\alpha;\a) +\chi_h(x)\Th^{({\bf
0})}(x,y+2\alpha;\a)\right) \\
\ \\ \Th^{(-\alpha)}(x,y;\a)=\frac{1}{2}\left(\Th^{({\bf
0})}(x,y+2\alpha;\a)-\chi_h(x)\Th^{({\bf 0})}(x,y+2\alpha;\a)\right)
\end{array}
\end{eqnarray}
that naturally result into the following lemma.
\begin{lemma}
For each $x \in h\BZ^n$, $y,\alpha\in\BR^n$ and $\a\in \cl_{0,n}$ we
then have the following:
\begin{enumerate}
\item The Clifford-vector-valued polynomials of the form
$\Th^{({\bf 0})}(x,y+2\alpha;\a)$ belong to
$\Lambda^*\mathcal{A}_h$. \item For $\alpha \neq {\bf 0}$, the
Clifford-vector-valued polynomials of the form
$\Th^{(\alpha)}(x,y;\a)$ resp. $\Th^{(-\alpha)}(x,y;\a)$ belong to
$\Lambda^*_+\mathcal{A}_h$ resp. $\Lambda^*_-\mathcal{A}_h$.
\end{enumerate}
\end{lemma}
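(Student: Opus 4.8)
The plan is to deduce both assertions directly from the two representation formulas proved immediately above the lemma — the closed form (\ref{hypercomplexChebyshevMultiVariable}) for $\alpha={\bf 0}$ and the projection identities (\ref{hypercomplexChebyshevProjection}) for $\alpha\neq{\bf 0}$ — combined with the defining description of $\Lambda^*\mathcal{A}_h$ from Subsection \ref{discreteMultivectorSubs} and of $\Lambda^*_\pm\mathcal{A}_h$. No genuinely new computation is required; the work is in checking that the objects produced by those formulas really have the claimed membership.

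For item~(1) I would start from (\ref{hypercomplexChebyshevMultiVariable}) with $y$ replaced by $y+2\alpha$. Although each factor is written there with a square root $\sqrt{y_j^2-1}$, the symmetrized combination $\frac{1}{2}\bigl[(y_j+2\alpha_j+\sqrt{(y_j+2\alpha_j)^2-1})^{x_j/h}+(y_j+2\alpha_j-\sqrt{(y_j+2\alpha_j)^2-1})^{x_j/h}\bigr]$ is precisely the Chebyshev value $T_{|x_j|/h}(y_j+2\alpha_j)$ by (\ref{ChebyshevPoly}), hence a \emph{real} number for every $x_j\in h\BZ$ and every fixed $y_j,\alpha_j\in\BR$ — in particular when $|y_j+2\alpha_j|<1$, where the square root is imaginary, the two summands are complex conjugates and the result is still real. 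Thus $\Th^{({\bf 0})}(x,y+2\alpha;\a)=\a\prod_{j=1}^n T_{|x_j|/h}(y_j+2\alpha_j)$, a fixed Clifford number $\a\in\cl_{0,n}$ multiplied by a real-valued lattice function of $x\in h\BZ^n$; by the description of $\Lambda^*\mathcal{A}_h$ this lies in $\Lambda^*\mathcal{A}_h$, which is item~(1). In passing one should note that the relevant ${~}_2F_1$ series terminate: since $x_j/h\in\BZ$, one of $-x_j/h$, $x_j/h$ is always a non-positive integer, so each ${~}_2F_1$ is a polynomial in its last argument and no restriction on $y$ is needed — this is also what makes the derivation of (\ref{hypercomplexChebyshevProjection}) valid for all $y\in\BR^n$.

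For item~(2) I would simply invoke (\ref{hypercomplexChebyshevProjection}): it exhibits $\Th^{(\alpha)}(x,y;\a)$ in the form $\frac{1}{2}\bigl(\f(x)+\chi_h(x)\f(x)\bigr)$ with $\f(x)=\Th^{({\bf 0})}(x,y+2\alpha;\a)$, which belongs to $\Lambda^*\mathcal{A}_h$ by item~(1); by the very definition of $\Lambda^*_+\mathcal{A}_h$ this places $\Th^{(\alpha)}(x,y;\a)$ in $\Lambda^*_+\mathcal{A}_h$, and the identical argument with the opposite sign of $\chi_h(x)$ places $\Th^{(-\alpha)}(x,y;\a)$ in $\Lambda^*_-\mathcal{A}_h$. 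The only point demanding any care is the operational legitimacy of (\ref{hypercomplexChebyshevProjection}) itself — that substituting the idempotent $\frac{1\pm\chi_h(x)}{2}$ into the last slot of each ${~}_2F_1$ is consistent — but this was already secured above by the binomial identity $\bigl(\frac{1-\lambda}{2}+\mu\frac{1\pm\chi_h(x)}{2}\bigr)^s=\frac{1\pm\chi_h(x)}{2}\bigl(\frac{1-\lambda}{2}+\mu\bigr)^s$, itself a consequence of $\chi_h(x)^2=1$, so here I would only cite it. Hence there is no real obstacle: the proof is the assembly of (\ref{hypercomplexChebyshevMultiVariable}), (\ref{hypercomplexChebyshevProjection}) and the definitions of the spaces involved.
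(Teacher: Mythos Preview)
Your proposal is correct and follows exactly the route the paper takes: the paper does not give a separate proof but states that the lemma ``naturally result[s]'' from the closed form (\ref{hypercomplexChebyshevMultiVariable}) and the projection identities (\ref{hypercomplexChebyshevProjection}), and you have simply spelled out that implication together with the definitions of $\Lambda^*\mathcal{A}_h$ and $\Lambda^*_\pm\mathcal{A}_h$. Your extra remarks on the reality of $T_{|x_j|/h}(y_j+2\alpha_j)$ and on the termination of the ${~}_2F_1$ series are welcome clarifications of points the paper leaves implicit.
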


The above description shows that the polynomials
$\Th^{(\alpha)}(x,y;\a)$ and $\Th^{(-\alpha)}(x,y;\a)$ can be
obtained by projecting the multi-variable $\cl_{0,n}-$valued
Chebyshev polynomials $\Th^{({\bf 0})}(x,y+2\alpha;\a)$ on the
multivector spaces $\Lambda_+^*\mathcal{A}_h$ and
$\Lambda_-^*\mathcal{A}_h$, respectively. From this
interrelationship the next proposition is rather obvious.

\begin{proposition}\label{hypercomplexChebyshevRecurrence}
For each $j=1,2,\ldots,n$ the hypercomplex polynomials
$\Th^{(\alpha)}(x,y;\a)$ defined in (\ref{hypercomplexChebyshev})
satisfy the set of three-term recurrence relations
\begin{eqnarray*}
\Th^{(\pm \alpha)}(x+h\e_j,y;\a)+\Th^{(\pm \alpha)}(x-h\e_j,y;\a)
=(2y_j+4\alpha_j)\Th^{(\pm \alpha)}(x,y;\a).
\end{eqnarray*}
\end{proposition}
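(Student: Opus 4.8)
The plan is to reduce the three-term recurrence for $\Th^{(\pm\alpha)}$ to the classical three-term recurrence $T_{k+1}(\lambda)+T_{k-1}(\lambda)=2\lambda\,T_k(\lambda)$ for Chebyshev polynomials of the first kind, exploiting the projection formulas (\ref{hypercomplexChebyshevProjection}) that express $\Th^{(\pm\alpha)}(x,y;\a)$ in terms of the unprojected $\cl_{0,n}$-valued polynomial $\Th^{({\bf 0})}(x,y+2\alpha;\a)$. First I would establish the recurrence for $\Th^{({\bf 0})}$ in each coordinate direction. Using the explicit product representation (\ref{hypercomplexChebyshevMultiVariable}), the $j$-th factor of $\Th^{({\bf 0})}(x\pm h\e_j,\tilde y;\a)$ (where I write $\tilde y=y+2\alpha$) is $\tfrac12\big[(\tilde y_j+\sqrt{\tilde y_j^2-1})^{(x_j\pm h)/h}+(\tilde y_j-\sqrt{\tilde y_j^2-1})^{(x_j\pm h)/h}\big]=T_{|x_j|/h\pm 1}(\tilde y_j)$ — more precisely, setting $w_\pm=\tilde y_j\pm\sqrt{\tilde y_j^2-1}$ one has $w_+w_-=1$, so $w_+^{\,k+1}+w_+^{\,-(k+1)}+w_+^{\,k-1}+w_+^{\,-(k-1)}=(w_++w_-)(w_+^{\,k}+w_+^{\,-k})=2\tilde y_j\,(w_+^{\,k}+w_+^{\,-k})$, which is exactly the identity $T_{k+1}(\tilde y_j)+T_{k-1}(\tilde y_j)=2\tilde y_j\,T_k(\tilde y_j)$. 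Since all the other $n-1$ factors in the product defining $\Th^{({\bf 0})}$ are unchanged by the shift $x\mapsto x\pm h\e_j$, adding the two shifted polynomials and factoring out the common part yields
\begin{eqnarray*}
\Th^{({\bf 0})}(x+h\e_j,\tilde y;\a)+\Th^{({\bf 0})}(x-h\e_j,\tilde y;\a)=2\tilde y_j\,\Th^{({\bf 0})}(x,\tilde y;\a),
\end{eqnarray*}
and since $\tilde y_j=y_j+2\alpha_j$ this is $(2y_j+4\alpha_j)\Th^{({\bf 0})}(x,y+2\alpha;\a)$.

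Next I would transfer this to the projected polynomials. By (\ref{hypercomplexChebyshevProjection}), $\Th^{(\alpha)}(x,y;\a)=\tfrac12(1+\chi_h(x))\Th^{({\bf 0})}(x,y+2\alpha;\a)$, and similarly with a minus sign for $\Th^{(-\alpha)}$. The one subtlety is that the idempotent $\tfrac12(1\pm\chi_h(x))$ depends on $x$, and $\chi_h(x\pm h\e_j)$ differs from $\chi_h(x)$: from the definition (\ref{staggeredFermion}) one has $\chi_h(x\pm h\e_j)=(-1)^{(x_j\pm h)/h}\prod_{k}(-1)^{x_k/h}[\e_k^+,\e_k^-]\cdot(\text{correcting the }j\text{-th sign})=-\chi_h(x)$. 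Hence $\tfrac12(1\pm\chi_h(x\pm h\e_j))=\tfrac12(1\mp\chi_h(x))$, so a projector of one chirality at $x$ becomes the \emph{opposite} chirality projector at the shifted points — but note that the left-hand side of the asserted recurrence uses the projector-at-$x$ on both shifted terms once we rewrite things correctly. The cleanest route is: apply $\tfrac12(1+\chi_h(x))$ (the projector evaluated at the \emph{fixed} point $x$) to the unprojected recurrence, which is a linear operator identity, giving directly
\begin{eqnarray*}
\tfrac12(1+\chi_h(x))\Th^{({\bf 0})}(x+h\e_j,y+2\alpha;\a)+\tfrac12(1+\chi_h(x))\Th^{({\bf 0})}(x-h\e_j,y+2\alpha;\a)\\
=(2y_j+4\alpha_j)\,\tfrac12(1+\chi_h(x))\Th^{({\bf 0})}(x,y+2\alpha;\a)=(2y_j+4\alpha_j)\Th^{(\alpha)}(x,y;\a).
\end{eqnarray*}
Then I must identify $\tfrac12(1+\chi_h(x))\Th^{({\bf 0})}(x\pm h\e_j,y+2\alpha;\a)$ with $\Th^{(\alpha)}(x\pm h\e_j,y;\a)$. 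Using $\chi_h(x)=-\chi_h(x\pm h\e_j)$, we get $\tfrac12(1+\chi_h(x))=\tfrac12(1-\chi_h(x\pm h\e_j))$, so $\tfrac12(1+\chi_h(x))\Th^{({\bf 0})}(x\pm h\e_j,y+2\alpha;\a)=\tfrac12(1-\chi_h(x\pm h\e_j))\Th^{({\bf 0})}(x\pm h\e_j,y+2\alpha;\a)=\Th^{(-\alpha)}(x\pm h\e_j,y;\a)$ by (\ref{hypercomplexChebyshevProjection}). This is \emph{not} quite the claimed $\Th^{(\alpha)}(x\pm h\e_j,y;\a)$, so the sign bookkeeping needs care.

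The main obstacle — and the step I would treat most carefully — is precisely this sign flip of $\chi_h$ under a unit lattice shift. The resolution is that the claimed identity should be read with all three polynomials built from the \emph{same} sign convention relative to a common reference, and the projection picture from (\ref{hypercomplexChebyshevProjection}) already encodes that $\Th^{(\alpha)}$ and $\Th^{(-\alpha)}$ are interchanged under $x\mapsto x\pm h\e_j$; equivalently, one may simply verify the recurrence at the level of the hypergeometric/operational definition (\ref{hypercomplexChebyshev}) directly, where the three-term recurrence of ${~}_2F_1(-k,k;\tfrac12;\cdot)$ in the parameter $k$ (which is the contiguous-function relation underlying $T_{k+1}+T_{k-1}=2\lambda T_k$) holds coefficient-wise, and the operational argument $\tfrac{1-y_j}{2}\pm\alpha_j(1\pm\chi_h(x))/2$ commutes with the scalar recurrence coefficients. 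I would present the argument via this second route to sidestep the $\chi_h$-sign issue entirely: fix $j$, note that the recurrence is purely a statement about the $j$-th factor in the $n$-fold composition (\ref{hypercomplexChebyshev}) since the shift $x\mapsto x\pm h\e_j$ affects only that factor and the value of $\chi_h$ inside it, and then invoke the classical identity $T_{k+1}(\lambda)+T_{k-1}(\lambda)=2\lambda T_k(\lambda)$ with $\lambda$ replaced by the endomorphism $y_j+2\alpha_j$ (valid because that endomorphism commutes with everything in sight and $\chi_h(x\pm h\e_j)=-\chi_h(x)$ is absorbed into the $\pm\alpha_j$ already present in $\Th^{(\pm\alpha)}$). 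That reduces everything to a one-variable scalar Chebyshev recurrence, which is standard.
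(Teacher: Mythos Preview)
Your reduction to the one-variable Chebyshev recurrence via the product formula for $\Th^{({\bf 0})}$ is exactly the paper's approach: the paper writes $G_h(t,\lambda;\gamma)=(\lambda+\gamma\sqrt{\lambda^2-1})^{t/h}$ and computes $G_h(t+h,\lambda;\gamma)+G_h(t-h,\lambda;\gamma)=2\lambda\,G_h(t,\lambda;\gamma)$, which is your $w_+w_-=1$ calculation in different notation, and then uses the tensor-product structure of (\ref{hypercomplexChebyshevMultiVariable}) to obtain the recurrence for $\Th^{({\bf 0})}(x,y+2\alpha;\a)$.

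The place you part company with the paper is the transfer from $\Th^{({\bf 0})}$ to $\Th^{(\pm\alpha)}$. The paper disposes of this in a single sentence --- ``From the set of relations (\ref{hypercomplexChebyshevProjection}), it is sufficient to show that the three-term recurrence formula holds for every $\Th^{({\bf 0})}$'' --- and never mentions the sign flip $\chi_h(x\pm h\e_j)=-\chi_h(x)$ that you correctly isolate. Your extended discussion of this point, and your two attempted workarounds, have no counterpart in the paper's proof; you are simply being more scrupulous here than the paper is. Whether the one-line reduction is in fact innocent (your own computation shows that naively applying the projector $\tfrac12(1+\chi_h(x))$ at the shifted points lands on $\Th^{(-\alpha)}$ rather than $\Th^{(\alpha)}$ on the right-hand side) is a legitimate question about the paper's argument itself, not a defect of your proposal relative to it.
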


\begin{proof}
From the set of relations (\ref{hypercomplexChebyshevProjection}),
it is sufficient to show that the three-term recurrence formula
holds for every $\Th^{({\bf 0})}(x+h\e_j,y+2\alpha;\a)$.

First, recall that for each $\gamma\in \cl_{0,n}$ satisfying
$\gamma^2=1$, the auxiliar function of the form
$G_h(t,\lambda;\gamma)=\left(
\lambda+\gamma\sqrt{\lambda^2-1}\right)^{\frac{t}{h}}$ satisfies the
recurrence formulae
\begin{eqnarray*} G_h(t+
h,\lambda;\gamma)&=&\left(\lambda+\gamma\sqrt{\lambda^2-1}\right)G_h(t,\lambda;\gamma)
\\ G_h(t- h,\lambda;\gamma)&=&
\frac{G_h(t,\lambda;\gamma)}{\lambda+\gamma\sqrt{\lambda^2-1}}.
\end{eqnarray*}

This gives rise to
\begin{eqnarray*}
G_h(t+h,\lambda;\gamma)+G_h(t-h,\lambda;\gamma)&=&
\frac{\left(\lambda+\gamma\sqrt{\lambda^2-1}\right)^2+1}{\lambda+\gamma\sqrt{\lambda^2-1}}
G_h(t,\lambda;\gamma)\\
&=&\frac{2\lambda^2+2\gamma\lambda\sqrt{\lambda^2-1}}
{\lambda+\gamma\sqrt{\lambda^2-1}}G_h(t,\lambda;\gamma)\\
&=&2\lambda~G_h(t,\lambda;\gamma).
\end{eqnarray*}

Since the Chebyshev polynomials $T_{\frac{|x_j|}{h}}(y_j+2\alpha_j)$
can be rewritten as
\begin{eqnarray*}
T_{\frac{|x_j|}{h}}(y_j+2\alpha_j)&=&
\frac{1}{2}\left(G_h(x_j,y_j+2\alpha_j;1)+G_h(x_j,y_j+2\alpha_j;-1)
\right),
\end{eqnarray*}
 from the substitutions
$\lambda=y_j+2\alpha_j$ we then have
\begin{eqnarray*}
T_{\frac{|x_j+h|}{h}}(y_j+2\alpha_j)+T_{\frac{|x_j-h|}{h}}(y_j+2\alpha_j)
=(2y_j+4\alpha_j) T_{\frac{|x_j|}{h}}(y_j+2\alpha_j)
\end{eqnarray*}
 so that, for each
$j=1,2,\ldots,n$, the set of three-term recurrence formulae
\begin{eqnarray*}
\Th^{({\bf 0})}(x+h\e_j,y+2\alpha;\a)+\Th^{({\bf
0})}(x-h\e_j,y+2\alpha;\a)=(2y_j+4\alpha_j)~\Th^{({\bf
0})}(x,y+2\alpha;\a).
\end{eqnarray*}
follows naturally in the view of the multi-variable representation
(\ref{hypercomplexChebyshevMultiVariable}).
\end{proof}

Now we are in conditions to write down the solutions of the
discretized Klein-Gordon equation (\ref{KleinGordonLattice}) in
terms of the hypercomplex counterpart of the Chebyshev polynomials
provided by (\ref{hypercomplexChebyshev}) and moreover, the spinor
vector-field components of the Dirac equation (\ref{DhMass}).

 First, recall that based on the coordinate formula
  (\ref{starLapl}) one can rewrite equation
(\ref{KleinGordonLattice}) as
\begin{eqnarray}
\label{KleinGordonLatticeMeanEq}
\sum_{j=1}^n\f(x+h\e_j)+\f(x-h\e_j)=\left((mh)^2+2n\right)\f(x).
\end{eqnarray}

Furthermore, Proposition \ref{hypercomplexChebyshevRecurrence} tells
us that the constraint
$$\sum_{j=1}^n (2y_j+4\alpha_j)=(mh)^2+2n$$ is
a necessary condition for $\Th^{({\bf 0})}(x,y+2\alpha;\a)$, and
moreover, $\Th^{(\pm \alpha)}(x,y;\a)$ being solutions of
(\ref{KleinGordonLatticeMeanEq}).

In the view of Corollary \ref{DiracLatticeCorollary}, if we choose
$y$ and $\alpha$ in such way that $\displaystyle \sum_{j=1}^n
y_j=\frac{(mh)^2}{2}$ and $\displaystyle
\sum_{j=1}^n\alpha_j=\frac{n}{2}$, one can compute the solutions of
(\ref{DhMassCoupled}) from
\begin{center}
$\g(x)=\Th^{(\alpha)}(x,y;\a)+\Th^{(-\alpha)}(x,y;\a)$.
\end{center}

A short computation based on projection arguments and on the
intertwining property $D_h \chi_h(x)=-\chi_h(x)D_h$ even shows that
the solutions $\f_+(x)$ and $\f_-(x)$ of (\ref{DhMassCoupled}) are
thus given by
\begin{eqnarray*}
\begin{array}{lll} \f_+(x)
&=&D_h\Th^{(-\alpha)}(x,y;\a)-m\Th^{(\alpha)}(x,y;\a)
\\ \ \\
\f_-(x) &=&D_h\Th^{(\alpha)}(x,y;\a)-m\Th^{(-\alpha)}(x,y;\a).
\end{array}
\end{eqnarray*}

\section{Some remarks on lattice fermion doubling}\label{FermionDoublingSection}

In relativistic quantum mechanics the Hamiltonian $-\Delta+m^2$
encoded on the time-harmonic Klein-Gordon equation $\Delta \f(x)=m^2
\f(x)$ can be derived by quantizating the paravector representations
$\zeta^+=\xi_0~ -i\xi$ and $\zeta^-=-(\xi_0~+i\xi)$ of the
complexified Clifford algebra $\BC\otimes\cl_{0,n}$ for which
$\xi_0\in\BR$, $i^2=-1$ and $\xi=\sum_{j=1}^n\xi_j \e_j\in
\cl_{0,n}$.

Namely, such quantization scheme can be formulated in terms of the
energy term $E=mc^2$ and the Dirac operator $D=\sum_{j=1}^n\e_j
\partial_{x_j}$ from the prime quantization rules
\begin{eqnarray}
\label{primeQuantization} \xi_0 \mapsto  iE &~~ \mbox{and}~~ & \xi_j
\mapsto i\partial_{x_j}.
\end{eqnarray}

Thus, the substitution $\xi_0^2=\frac{E^2}{c^4}$ (assuming the
energy normalization in terms of the speed of light) gives rise to
the quadratic equations $\zeta^+\zeta^- =\zeta^-
\zeta^+=\sum_{j=1}^n \xi_j^2-m^2$ that in turn yield the
factorization relations, analogous to the ones obtained in
\cite{CFK11}:
\begin{eqnarray*}
(D-im)(D+im)=(D+im)(D-im)=-\Delta+m^2.
\end{eqnarray*}

Therefore, the solutions of the Klein-Gordon equation in {\it
continuum} can be formulated on the momentum space through the
energy-momentum relation $\sum_{j=1}^n \xi_j^2=m^2$.

When one replaces the {\it continuum} Dirac operator $D$ by the
central difference Dirac operator
$\frac{1}{2}\left(D_{h/2}^{-}+D_{h/2}^{+}\right)$ one can mimecking
the above quantization procedure also for the derivation of the
discretized Klein-Gordon equation. That is, based on the coordinate
action of $\frac{1}{2}\left(D_{h/2}^{-}+D_{h/2}^{+}\right)$ on
$\f(x)$ given by
\begin{eqnarray*}
\frac{1}{2}\left(D_{h/2}^{-}\f(x)+D_{h/2}^{+}\f(x)\right)=\sum_{j=1}^n
\e_j
\frac{\f\left(x+\frac{h}{2}\e_j\right)-\f\left(x-\frac{h}{2}\e_j\right)}{h}
\end{eqnarray*}
and from the Taylor series representation $S_{h/2}^{\pm
j}=\exp\left(\pm \frac{h}{2}\partial_{x_j}\right)$ carrying the
shift operators $S_{h/2}^{\pm j}\f(x)=\f\left(x\pm
\frac{h}{2}\e_j\right)$ (cf. \cite[Subsection
2.2]{FaustinoMonomiality14}) it follows that the factorization of
the discretized Klein-Gordon operator $-\Delta_h+m^2$ provided by
the set of relations
\begin{eqnarray*}
-\Delta_h+m^2&=&\left(\frac{1}{2}D_{h/2}^{-}+\frac{1}{2}D_{h/2}^{+}+im\right)
\left(\frac{1}{2}D_{h/2}^{-}+\frac{1}{2}D_{h/2}^{+}-im\right)\\
&=&\left(\frac{1}{2}D_{h/2}^{-}+\frac{1}{2}D_{h/2}^{+}-im\right)
\left(\frac{1}{2}D_{h/2}^{-}+\frac{1}{2}D_{h/2}^{+}+im\right)
\end{eqnarray*}
encodes the quantization correspondence $\displaystyle \sum_{j=1}^n
i\e_j~\frac{2}{h}\sin\left(\frac{h}{2}\xi_j \right)\rightarrow
\frac{1}{2}\left(D_{h/2}^{-}+D_{h/2}^{+}\right)$ and the following
energy-momentum relation on $h\BZ^n$ as well:
$$\sum_{j=1}^n \frac{4}{h^2}\sin^2\left(\frac{h\xi_j}{2}
\right)=m^2.$$

On the flavor of finite difference potentials (cf.
\cite{GH01,CKKS14}) such kind of quantization on the lattice that
results into the aforementioned energy-momentum relation was already
considered, in a {\it hidden} way, when the authors obtained
integral representations involving Green's-type functions based on
the fact that the restriction of the continuous Fourier transform to
$Q_h=\left[ -\frac{\pi}{h},\frac{\pi}{h}\right]^n$ gives an inverse
for the discrete Fourier transform over $h\BZ^n$ (see also
\cite[Section 1.5]{MontvayMunster94}, \cite[Section II.]{daVeiga02}
and \cite[Subsection 4.2]{FaustinoMonomiality14} for further
analogies and comparisons).

The drawback besides this quantization scheme is two-fold: for the
factorization of the discretized Klein-Gordon equation
(\ref{KleinGordonLattice}) one needs to consider, in addition the
lattice $\frac{h}{2}\BZ^n$ with mesh width $\frac{h}{2}>0$ and the
left-hand side of the above summand has $2n+1$ zeros inside the {\it
Brillouin zone} $Q_h=\left[ -\frac{\pi}{h},\frac{\pi}{h}\right]^n$,
and so, such discretization yields spectrum degeneracy for
$\frac{1}{2}\left(D_{h/2}^{-}+D_{h/2}^{+}\right)$ in the massless
limit $m\rightarrow 0$, as already discussed in Subsection
\ref{StateArtSubsection}.

When one formulates the discretized Dirac equation
$D_h\f(x)=m\chi_h(x)\f(x)$ from the finite difference discretization
$D_h=\frac{1}{2}\left( D_h^-+D_h^+\right)+\frac{h}{2}\square_h$
provided by (\ref{DhClnn}), one assures that its momentum
representation provided by the prime quantization rules
$\xi_j\mapsto i\partial_{x_j}$ and $-i\partial_{\xi_j}\mapsto x_j$
over $\cl_{n,n}$:
\begin{eqnarray}
\label{DhClnnMomentum}
 \sum_{j=1}^n
 i\e_j~\frac{1}{h}\sin(h\xi_j)\widehat{\f}(\xi)+\sin^2\left(\frac{h\xi_j}{2}\right)
 \e_{j+n}\widehat{\f}(\xi)=m\prod_{j=1}^n
 \e_{j+n}\e_j\cosh\left(\frac{\pi}{h}\partial_{\xi_j}\right)\widehat{\f}(\xi)
\end{eqnarray}
behaves like $\frac{1}{2}\left(D_{h}^{-}+D_{h}^{+}\right)$ in the
neighborhood of $\xi_j=0$ and $\xi_j=\pm \frac{\pi}{2h}$ and to
avoid gaps in a neighborhood of $\xi_j=\pm \frac{\pi}{h}$. We remark
that this behavior agrees with the results obtained by A.~Bori\c ci
\cite{Borici08} and M.~Creutz \cite{Creutz08}.

Contrary to Kanamori-Kawamoto's approach (cf.~\cite[Subsection
6]{KK04}), the mass term $m$ was treated as a {\it local} spinor
field potential $m~\chi_h(x)$ that produces, as it was explained in
Subsection \ref{discreteMultivectorSubs}, a commutative action over
the lattice $h\BZ^n$.
 The connection with staggered fermions of Kogut-Susskind type
was recognized from the shift operators
$$\cosh\left(\frac{\pi}{h}\partial_{\xi_j}\right):
\widehat{\f}(\xi)\mapsto \frac{\widehat{\f}\left(\xi+
\frac{\pi}{h}\e_j\right)+\widehat{\f}\left(\xi-
\frac{\pi}{h}\e_j\right)}{2},$$ as representations of the
permutation term $(-1)^{\frac{x_j}{h}} =\cos\left( \frac{\pi
x_j}{h}\right)$ on the momentum space (cf. \cite[Subsection
4.3]{MontvayMunster94}). This rotation symmetry action on $h\BZ^n$
is thus essential to assure that to the corners of the {\it
Brillouin zone} $Q_h=\left[ -\frac{\pi}{h},\frac{\pi}{h}\right]^n$
(that is, the doublers) goes to a single point in the limit
$h\rightarrow 0$.

At the level of the phase space coordinates $(x,\xi)\in h\BZ^n
\times [-\frac{\pi}{h},\frac{\pi}{h}]$, one can further introduce
the {\it local} transformation actions
\begin{eqnarray*}
\widehat{\f}(\xi)\mapsto
\Th^{\left(\frac{1}{2}\e\right)}(x,y;\a)\widehat{\f}(\xi) &
\mbox{and} & \widehat{\f}(\xi)\mapsto
\Th^{\left(-\frac{1}{2}\e\right)}(x,y;\a)\widehat{\f}(\xi)
\end{eqnarray*}
 underlying the multivector representations
\begin{center}
$\displaystyle y=\sum_{j=1}^n 2\sin^2\left(
\frac{h\xi_j}{2}\right)\e_j$ and $\displaystyle
\frac{1}{2}\e=\sum_{j=1}^n \frac{1}{2}\e_j$
\end{center}
 of the
$n-$tuples $\left(2\sin^2\left(
\frac{h\xi_1}{2}\right),2\sin^2\left(
\frac{h\xi_2}{2}\right),\ldots,2\sin^2\left(
\frac{h\xi_n}{2}\right)\right)$ and
$\left(\frac{1}{2},\frac{1}{2},\ldots,\frac{1}{2}\right)$,
respectively.

Essentially, these local transformations provide families of
solutions for the discretized Klein-Gordon equation as well as they
assure the existence of two conserved spinorial flows of chiral and
achiral type, respectively, for the momentum representation
(\ref{DhClnnMomentum}). From Corollary \ref{DiracLatticeCorollary},
one can also see that the action of the Dirac-field operator
$D_h-m\chi_h(x)$ on such transformations produce solutions for the
Dirac equation (\ref{DhMass}) over the phase space $h\BZ^n\times
\left[ -\frac{\pi}{h},\frac{\pi}{h}\right]^n$. Moreover, the
solutions of (\ref{DhMassCoupled}) with membership in
$\Lambda_+^*\mathcal{A}_h\oplus \Lambda_-^*\mathcal{A}_h$, given by
\begin{eqnarray*}
\begin{array}{lll} \widehat{\f}_+(x,\xi)
&=&D_h\left(\Th^{\left(-\frac{1}{2}\e\right)}(x,y;\a)\widehat{\f}(\xi)\right)-
m\Th^{\left(\frac{1}{2}\e\right)}(x,y;\a)\widehat{\f}(\xi)
\\ \ \\
\widehat{\f}_-(x,\xi)
&=&D_h\left(\Th^{\left(\frac{1}{2}\e\right)}(x,y;\a)\widehat{\f}(\xi)\right)
-m\Th^{\left(-\frac{1}{2}\e\right)}(x,y;\a)\widehat{\f}(\xi).
\end{array}
\end{eqnarray*}
provide the required spinor vector-field components of
(\ref{DhMass}).

\subsection*{Acknowledgment} I would like to thank to Jayme Vaz Jr.
(IMECC-UNICAMP, Brazil) for driving my attention at an earlier stage
to Rabin's approach \cite{Rabin82} and for his research paper
\cite{Vaz97WittenEq}, on which the construction of spinor-like
spaces, analogue to the ones constructed in Section
\ref{MultivectorCalculusLattice}, was considered.

This paper was further developed in depth after a guest visit to
School of Mathematics, University of Leeds (UK), from April 22 till
April 30 2014. I would also like to thank to Vladimir~V.~Kisil for
the fruitful discussions around this topic during these days and for
his surge of interest on the feasibility of this approach.

Last but not least a special acknowledgment to Artan Bori\c ci
(University of Tirana, Albania) and to Paulo A. F. da Veiga
(ICMC-USP, Brazil) for the references
\cite{Borici08,Creutz08,daVeiga02}, and to the anonymous referees
for the careful readership of the paper.



\newpage


\begin{thebibliography}{1}

\bibitem{BecherJoos82}
P.~Becher, H.~Joos, \emph{The Dirac-K\"ahler equation and fermions
on the lattice} Z. Phys. C {\bf 15} (1982), 343--365.

\bibitem{Bohm65}
{D.~Bohm, \emph{Space, time, and the quantum theory understood in
terms of discrete structural process} In: Proceedings of the
International Conference on Elementary Particles 1965, Kyoto (1965),
252--287.}

\bibitem{Borici08}
A.~Borici, \emph{Creutz fermions on an orthogonal lattice} Physical
Review D {\bf 78}(7) (2008), 074504.

\bibitem{BorsNielsen00}
N.M. Bor$\check{s}$tnik, H.B.~Nielsen, \emph{Dirac-K\"ahler approach
connected to quantum mechanics in Grassmann space} Physical Review
D, {\bf 62}(4) (2000), 044010.

\bibitem{CKKS14}
P.~Cerejeiras, U. K\"ahler, M. Ku, F. Sommen, \emph{Discrete Hardy
Spaces} Journal of Fourier Analysis and Applications 20(4) (2014),
715-750.

\bibitem{ShaFannPaulino03}
Y.-S. Chan, A.C. Fannjiang, G.H. Paulino, \emph{Integral equations
with hypersingular kernels--theory and applications to fracture
mechanics} International Journal of Engineering Science
\textbf{41}(7) (2003), 683--720.

\bibitem{ChelkakSmirnov12}
D.~Chelkak, S.~Smirnov,~\emph{Universality in the 2D Ising model and
conformal invariance of fermionic observables}, Invent math {\bf
189}(3) (2012), 515--580.

\bibitem{Cole70}
{E.A.B. Cole, \emph{Transition from a continuous to a discrete
space-time scheme} Il Nuovo Cimento A {\bf 66}(4) (1970), 645--656.}

\bibitem{CFK11}
 D.~Constales, N.~Faustino, R.S.~Krau\ss har, \emph{Fock spaces, Landau
operators and the time-harmonic Maxwell equations} {Journal of
Physics A: Mathematical and Theoretical} \textbf{44}(13) (2011)
135303.

\bibitem{Creutz08}
M. Creutz, (2008), \emph{Local chiral fermions}, The XXVI
International Symposium on Lattice Field Theory (2008), arXiv
preprint \href{http://arxiv.org/abs/0808.0014}{arXiv:0808.0014}.

\bibitem{RochaVaz06}
R.~da Rocha, J.~Vaz Jr, \emph{Extended Grassmann and Clifford
Algebras}, Adv. appl. Clifford alg. {\bf 16}(2) (2006), 103–-125.

\bibitem{daVeiga02}
P. A. F. da Veiga, M. O'Carroll \& R.~Schor (2002) \emph{Excitation
spectrum and staggering transformations in lattice quantum models}
Physical Review E {\bf 66}(2), 027108.

\bibitem{DimakisHoissen94}
A.~Dimakis, F. M\"uller--Hoissen, \emph{Discrete differential
calculus: Graphs, topologies, and gauge theory} Journal of
Mathematical Physics {\bf 35}(12) (1994) 6703--6735.

\bibitem{FKS07}
{N.~Faustino, U.~K\"ahler, F.~Sommen, \emph{Discrete Dirac operators
in Clifford Analysis}, Adv. Appl. Cliff. Alg. \textbf{17}(3) (2007),
451--467.}

\bibitem{Faustino09}
N.~Faustino, \emph{Discrete Clifford Analysis}, dissertation, (Ria
Reposit\'orio Institucional, Universidade de Aveiro, MA, 2009),
{URL:}
\href{http://hdl.handle.net/10773/2942}{http://hdl.handle.net/10773/2942}.

\bibitem{FaustinoMonomiality14}
N.~Faustino, \emph{Classes of hypercomplex polynomials of discrete
variable based on the quasi-monomiality principle} Applied
Mathematics and Computation {\bf 247} (2014), 607--622.

\bibitem{Friedan82}{ D.~Friedan,
\emph{A proof of the Nielsen-Ninomiya theorem} Comm. Math. Phys {\bf
85}(4) (1982), 481--490.}

\bibitem{Froyen89}
 S.~Froyen, \emph{Brillouin-zone integration by Fourier quadrature:
Special points for superlattice and supercell calculations}
\textit{Physical Review B} \textbf{39}(5) (1989), 3168--3172.

\bibitem{GH01}
{K.~G\"urlebeck, A. Hommel, \emph{On finite difference Dirac
operators and their fundamental solutions} Advances in Applied
Clifford Algebras {\bf 11}(2) (2001), 89--106.}

\bibitem{KK04}{I.~Kanamori, N.~Kawamoto,
\emph{Dirac-Kaehler Fermion from Clifford Product with Noncommutative
 Differential Form on a Lattice} Int. J. Mod. Phys. A {\bf 19}(05) (2004), 695--736.}

\bibitem{KS75}
{J. Kogut, L.~Susskind, \emph{Hamiltonian formulation of Wilson's
lattice gauge theories}~Phys. Rev. D {\bf 11}(2) (1975), 395--408.}

\bibitem{Mercat01}
C.~Mercat, \emph{Discrete Riemann surfaces and the Ising model}
Communications in Mathematical Physics {\bf 218}(1) (2001),
177--216.

\bibitem{MonacoCapelas94}
R.L.~Monaco, E.C. de Oliveira, \emph{A new approach for the
Jeffreys--Wentzel--Kramers--Brillouin theory} Journal of
Mathematical Physics {\bf 35}(12) (1994), 6371--6378.

\bibitem{MontvayMunster94}
I.~Montvay, G.~M\"unster,
\emph{Quantum Fields on a Lattice},
  (Cambridge
University Press, Cambridge, MA, 1994).

\bibitem{NN81}{H.B. Nielsen, M. Ninomiya,
\emph{A no-go theorem for regularizing chiral fermions} Phys. Lett.
B {\bf 105}({2}) (1981), 219--223.}

\bibitem{Rabin82}
{J. Rabin, \emph{Homology Theory of Lattice Fermion Doubling}~Nuc.
Phys. B {\bf 201}({2}) (1982), 315 -- 332.}


\bibitem{RodriguesOliveira07}{
W.A.~Rodrigues  Jr., E.C.~de Oliveira, \emph{The many faces of
Maxwell, Dirac and Einstein equations: a Clifford bundle approach}
(Vol. 722) (Springer, Heidelberg, MA, 2007)}.


\bibitem{Vaz97}{
J. Vaz Jr.,~\emph{Clifford-like Calculus over lattices} Adv. Appl.
Clifford Alg. \textbf{7}(1) (1997), 37--70.}

\bibitem{Vaz97WittenEq}{
J. Vaz Jr, \emph{Clifford algebras and Witten's monopole equations}
In Apanasov, Bradlow, Rodrigues, Uhlenbeck editors. Geometry,
Topology and Physics: interfaces in computer
  science and operations research. 2nd ed. Vol.~2 ( Walter de Gruyter \& Co.,
  Berlin--New York (USA) MA,
  1997), 277--300.
  }

\bibitem{Wilson74}
{W.K. Wilson, \emph{Confinement of quarks}~Phys. Rev. D {\bf 10}(8)
(1974), 2445 -- 2459.}


\end{thebibliography}
\end{document}